\documentclass[11pt]{article}

\usepackage[margin=0.9in]{geometry}

\usepackage{titlesec,fancyhdr,etoolbox}
\usepackage{booktabs}
\usepackage{multirow}
\usepackage{threeparttable}
\usepackage{adjustbox}
\usepackage{amsmath,amssymb}
\usepackage{iftex}
\usepackage{amsfonts}
\usepackage{multirow}
\usepackage{amsthm, bm}
\usepackage{xcolor}
\usepackage{graphics} 
\usepackage{algorithm}
\usepackage{algpseudocode}
\usepackage{booktabs}
\usepackage{lineno}
\allowdisplaybreaks

\usepackage{natbib}

\usepackage[
    colorlinks=true,
    citecolor=blue,
    linkcolor=blue,
    urlcolor=blue
]{hyperref}

\newtheorem{theorem}{Theorem}[]
\newtheorem{definition}[theorem]{Definition}

\newtheorem{lemma}[theorem]{Lemma}

\newtheorem{remark}{Remark}

\newcommand{\R}{\mathbb{R}}
\newcommand{\kcl}{{(k)}}
\newcommand{\kclp}{{(l)}}
\newcommand{\kone}{{(1)}}
\newcommand{\ktwo}{{(2)}}

\newcommand{\kkm}{{(K-1)}}
\newcommand{\kk}{{(K)}}
\newcommand{\Kkk}{K_{k,l}}
\newcommand{\none}{{n_1}}
\newcommand{\ntwo}{{n_2}}
\newcommand{\nK}{{n_K}}
\newcommand{\nk}{{n_k}}
\newcommand{\nkk}{{n_{l}}}
\newcommand{\lamk}{\lambda_{k,l}}
\newcommand{\lamkk}{\lambda_{l,k}}
\newcommand{\Exp}{{E}}
\newcommand{\Prob}{P} 

\newcommand{\snk}{\sum_{i = 1}^{n_k}}
\newcommand{\snkk}{\sum_{i = 1}^{n_{l}}}
\newcommand{\sd}{\sum_{j = 1}^{p}}
\newcommand{\sK}{\sum_{k = 1}^{K}}

\newcommand{\G}{{{G}_p}}
\newcommand{\CBt}{{Q_{\alpha,2}}}
\newcommand{\CBp}{{Q_{\alpha,pair}}}
\newcommand{\CBk}{{Q_{\alpha,K}}}
\newcommand{\kt}{k=1,2}
\newcommand{\kK}{k=1,\dots,K}
\newcommand{\kKK}{1\leq k<l \leq K}
\newcommand{\sm}{\sum_{m = 1}^{K(K+1)p/2}}
\newcommand{\st}{\sum_{t = 1}^{N}}

\title{\bfseries Testing Microbiome Community Differences in High Dimensions: A Bootstrap Approach for Compositional Data}

\author{
Monika Bhattacharjee \thanks{Department of Mathematics, Indian Institute of Technology Bombay}
\and
Nilanjan Chakraborty \thanks{Department of Mathematics and Statistics, Missouri University of Science and Technology}
\and
Sayan Das \thanks{Department of Statistics and Data Science, Washington University in St. Louis}
\and
Sounak Chakraborty \thanks{Department of Statistics and Data Science, University of Missouri}
\and
Lei Liu \thanks{Division of Biostatistics, Washington University in St. Louis}
\and
Yiming Shi $^\P$
\and
Kristine M. Wylie \thanks{Department of Pediatrics, Washington University in St. Louis}
\and
Todd N. Wylie$^{\;\;|\!|}$
\and
Molly J. Stout\thanks{Department of Obstetrics and Gynecology, University of Michigan
}
}
\date{}
\begin{document}
\maketitle

\begin{abstract}
Understanding differences in microbial community structure is critical for uncovering risk factors and mechanisms underlying diseases such as colorectal cancer and preterm birth. Microbiome data present unique statistical challenges because they are compositional in nature, violating assumptions of many classical inference procedures. We propose an empirical bootstrap framework that enables robust hypothesis testing for equality of microbial community means across groups, including two-sample, paired, and multi-sample settings. The method accounts for the simplex structure of microbiome data and provides valid inference even in high-dimensional regimes. Through applications to two large-scale studies, fecal microbiota in colorectal adenoma and cancer patients, and vaginal microbiota in pregnancy with preterm birth outcomes—we demonstrate that our approach identifies clinically meaningful differences that conventional methods fail to detect, such as age-related differences in adenoma prevalence and race-associated disparities in vaginal microbiome composition. These results highlight the potential of resampling-based inference for advancing microbiome research, improving reproducibility, and uncovering clinically relevant microbial signatures. 
\end{abstract} 

\vspace{0.5em}

\noindent
\textbf{Keywords:}
Conformal prediction;
optimal sample allocation;
asymptotic optimality;
distribution-free inference.

\vspace{0.3em}

\section{Introduction}
\label{s:intro}
The human microbiome plays a critical role in health and disease, influencing conditions ranging from colorectal cancer to preterm birth. Advances in sequencing technology have led to large-scale studies, such as the Human Microbiome Project (HMP), generating high-dimensional datasets that measure the relative abundance of thousands of microbial taxa across comparatively few individuals. These datasets are compositional in nature, with all microbial proportions constrained to sum to one. This structural feature makes many conventional statistical methods (e.g., the two-sample $t$-test, Wilcoxon signed-rank test, or classical goodness-of-fit tests) inappropriate and potentially misleading. The challenges of analyzing compositional data were first formalized by \citet{aitchison1982statistical}, who introduced log-ratio transformations to adapt standard inferential methods, and later expanded in his monograph \citep[Chapter~7]{aitchison2003compositional}. Building on these foundations, a variety of statistical approaches have been proposed for compositional data, including multivariate analysis of variance methods \citep{anderson2014permanova}, microbiome-specific frameworks \citep{li2015microbiome}, and bias-corrected differential abundance testing \citep{banerjee2018keystone, lin2020ancombc}. A comprehensive review of these techniques is provided in \citet{greenacre2021compositional}.

Despite these advances, much of the literature treats microbial taxa individually, leading to reduced power when the primary interest lies in detecting global differences across complex microbial communities. Multivariate strategies that consider the full microbial composition are more appropriate in these settings. For example, \citet{cao2017compositional} proposed a max-norm based test for assessing differences in mean compositions, demonstrating asymptotic consistency under restrictive conditions such as equal covariance structures. Later, \citet{li2022maximum} extended these ideas to settings with unequal covariance matrices, but both approaches rely on large sample sizes for their asymptotic approximations and remain limited to two-sample comparisons.

More recently, bootstrap-based procedures have been investigated for high-dimensional mean testing \citep{xue2020distribution, chakraborty2023multiplier, wu2024multisample}. However, most of these methods are not directly applicable to microbiome data. For example, multiplier bootstrap techniques \citep{xue2020distribution, chakraborty2023multiplier} cannot incorporate the simplex structure of compositional data, while parametric bootstrap approaches \citep{wu2024multisample} require covariance estimation that is unstable when matrices are nearly singular in high dimensions. These limitations call for new microbiome-specific approaches.

In this article, we propose EBC (Empirical Bootstrap for Compositional Data), an empirical bootstrap-based testing procedure for high-dimensional compositional data. The proposed max-norm test applies to two-sample, paired-sample, and multi-sample settings, while accounting for the compositional nature of microbiome data through CLR transformation and empirical bootstrap. We establish its theoretical validity under mild regularity conditions and demonstrate through extensive simulations that EBC maintains nominal type I error while achieving competitive or improved power relative to existing methods. Finally, we illustrate its practical utility using two clinically relevant microbiome studies involving colorectal cancer and preterm birth. These case studies demonstrate the practical utility of the proposed method for reliable inference in high-dimensional microbiome studies. 

The remainder of this paper is organized as follows. Section~\ref{why} discusses the clinical motivation for comparing microbial communities in colorectal cancer and preterm birth studies. Section~\ref{testtwo} introduces the proposed EBC framework for two-sample, paired-sample, and multi-sample testing, while Section~\ref{algo} describes the empirical bootstrap algorithm used to compute the data-driven critical values. Section~\ref{assum} presents the regularity assumptions, and Section~\ref{thr} establishes the theoretical guarantees of the proposed procedure. Section~\ref{simu} reports the results of simulation studies comparing EBC with existing methods, and Section~\ref{realdata} illustrates its performance using two real-world microbiome datasets. Finally, Section~\ref{Discussion} concludes with a discussion. Theoretical proofs and additional simulation details are provided in Section~\ref{App}.


\vspace{-0.08in}
\section{Why This Matters Clinically: Microbiome Differences in Colorectal Cancer and Preterm Birth} \label{why}


Colorectal cancer (CRC) remains a major global health challenge, with more than 1.9 million new cases and approximately 904{,}000 deaths estimated in 2022, making CRC one of the most commonly diagnosed cancers worldwide and a leading cause of cancer mortality \citep{Bray2024CA}. In the United States, prognosis depends strongly on stage at diagnosis: contemporary SEER estimates indicate 5-year relative survival of about 91\% for localized disease, 74\% for regional, and 16\% for distant-stage CRC \citep{SEERStatFactsCRC}. Preterm birth (PTB) likewise poses a substantial global public health burden: an estimated 13.4 million babies (1 in 10 live births) were born too soon in 2020, and PTB remains the leading cause of under-5 mortality worldwide \citep{BornTooSoon2023,WHO_PTB_2023}. In the United States, the provisional national PTB rate was 10.41\% in 2023, with persistent disparities by race and ethnicity \citep{CDC_VSRR_2023}.

Cancer care imposes substantial costs on health systems and patients. In the U.S., total cancer care costs were projected at \$208.9\,billion in 2020, with CRC among the highest contributors to medical services spending \citep{NCI_CTPR_EconBurden}. CDC analyses similarly note that CRC accounts for about 11.6\% of all cancer treatment costs, with estimated 2020 spending of \$23.7\,billion on medical services and \$0.6\,billion on prescription drugs \citep{CDC_CRC_costs}. For PTB, the lifetime societal cost for the 2016 U.S.\ birth cohort was estimated at \$25.2\,billion (present value), averaging about \$64{,}800 per preterm birth, driven largely by neonatal/infant medical care and longer-term developmental services \citep{Waitzman2021}. These cost gradients worsen when prevention, risk stratification, and timely interventions are suboptimal.

For CRC, tumor-associated dysbiosis provides both mechanistic insight and potential biomarkers. \textit{Fusobacterium nucleatum} is repeatedly enriched in CRC tissue and higher intratumoral loads are linked to recurrence, metastasis, and poorer outcomes; recent work identified a tumor-adapted \textit{F.\ nucleatum} clade with disease-relevant features \citep{ZepedaRivera2024}. In parallel, colibactin-producing (\emph{pks}$^{+}$) \emph{Escherichia coli} induces a distinctive genotoxic mutational signature that has been detected in human intestinal organoids and in patient tissues, supporting a plausible causal link between microbial genotoxins and colorectal carcinogenesis \citep{PleguezuelosManzano2020,Chen2023}. For PTB, vaginal microbiome community state types (CSTs) with low \emph{Lactobacillus}, particularly depletion of \emph{L.\ crispatus}-and greater anaerobe diversity (e.g., \emph{Gardnerella}) are consistently associated with elevated spontaneous PTB risk; meta-analytic and multi-omic studies show stronger predictability of early PTB from community composition and functional potential \citep{Huang2023, Gudnadottir2022, Liao2023}.

In CRC, insufficient screening uptake (despite USPSTF grade~A/B recommendations for average-risk adults 45--75 years) shifts diagnoses toward advanced stages with markedly worse survival and higher downstream costs; improving coverage and using validated biomarkers could both reduce mortality and lower spending \citep{USPSTF2021, NCI_CTPR_EconBurden, CDC_CRC_costs}. In PTB, failure to identify and manage infection- or microbiome-linked risk early in pregnancy can increase neonatal intensive care utilization and long-term disability, amplifying the multibillion-dollar societal burden \citep{Waitzman2021, BornTooSoon2023}. Because these questions hinge on detecting \emph{multivariate} differences in microbial compositions between clinical groups, methods that respect compositional constraints and retain power in high dimensions are directly responsive to medical decision-making in CRC and PTB.

\section{Empirical Bootstrap Tests for High-Dimensional Compositional Data } \label{testtwo}


This section introduces a practical testing framework for comparing mean compositions in high-dimensional microbiome (and other compositional) data while respecting the zero-sum constraint induced by CLR transformation. In Subsection~\ref{twotest}, we introduce the preliminaries and formulate the two-sample problem, the most common clinical comparison (e.g., cases vs.\ controls). We then extend it to paired designs in Subsection~\ref{paired} to accommodate within-subject or before and after comparisons frequently encountered in longitudinal or matched clinical settings. Subsection~\ref{multitest} further generalizes the approach to multi-arm studies, enabling inference across more than two clinical groups. Finally, Subsection~\ref{algo} details the data-driven procedure (empirical bootstrap) used to compute the test statistic and its critical value, providing readers with clear implementation steps for applied analyses.

\subsection{Two-sample Test for Independent Populations} \label{twotest}
\subsubsection{Some Preliminaries of Compositional Data } 
We define $\mathcal{S}^{p-1}$ as the $(p-1)$-dimensional simplex where $\mathcal{S}^{p-1} = \{(x_1,\dots,x_p) \in \R^p: x_j>0,\; j=1,\dots,p \text{ and } \sd x_j = 1\}$. Let $X^\kone = (X_1^\kone, \dots, X_\none^\kone)^T$ and $X^\ktwo = (X_1^\ktwo,\dots, X_\ntwo^\ktwo)^T$ be the observed data matrices, where $X^\kone$ and $X^\ktwo$ are collections of independent random vectors in $\mathcal{S}^{p-1}$. Further, we denote $W^\kone = (W_1^\kone, \dots, W_\none^\kone)^T$ and $W^\ktwo = (W_1^\ktwo, \dots, W_\ntwo^\ktwo)^T$ as the unobserved bases that generates the observed compositional data $X^\kone$ and $X^\ktwo$ respectively via the following normalization,
\begin{align*}
    X_{ij}^\kcl = W_{ij}^\kcl\Big{/}\sd W_{ij}^\kcl,\; i=1,\dots,\nk,\;j=1\dots,p,\;\kt,
\end{align*}
where $X_{ij}^\kcl$ and $W_{ij}^\kcl$ are the $j$-th elements of $X_i^\kcl$ and $W_i^\kcl$ respectively.

Let $V^\kcl = (V_1^\kcl, \dots, V_\nk^\kcl)^T$ be the log-basis vectors, where $V_{ij}^\kcl = \log W_{ij}^\kcl,\;\kt.$ For $\kt$, suppose that $V_1^\kcl,\dots, V_\nk^\kcl$ are two independent samples, each from a distribution with mean $\mu^\kcl = (\mu_1^\kcl, \dots, \mu_p^\kcl)^T$ and covariance matrix $\Sigma_i^\kcl = (\sigma_{i, jj'}^\kcl)_{jj'=1}^p,\; i = 1,\dots,n_k.$ Due to the simplicial structure of the observed data, the null hypothesis
\begin{align} \label{hypo}
H_{01}: \mu^\kone = \mu^\ktwo \text{ vs. } H_{a1}: \mu^\kone \neq \mu^\ktwo
\end{align}
is not testable (cf. \citet{cao2017compositional}). Instead, the above hypothesis is testable up to an additive constant. Towards that, the testable hypothesis can be written as
\begin{equation} \label{H01}
\begin{split}
    &H_{02}: \mu^\kone = \mu^\ktwo + c1_p, \text{ for some $c\in\R,$ vs. } H_{a2}: \mu^\kone \neq \mu^\ktwo + c1_p, \text{ for any } c\in\R.
\end{split}
\end{equation}
Following \cite{aitchison1982statistical}, for the observed compositional data $X^\kone$ and $X^\ktwo,$ we define the centered log-ratio matrices $Z^\kcl = (Z_1^\kcl, \dots, Z_\nk^\kcl)^T$ as
    \begin{align}
     \label{logbase}
        Z_{ij}^\kcl = \log\left\{X_{ij}^\kcl\Big{/}g(X_i^\kcl) \right\},\; i=1,\dots,\nk,\;j=1\dots,p,\;\kt,
    \end{align}
where $g(x) = (\prod_{i=1}^p x_i)^{1/p}$ is the geometric mean of the vector $x=(x_1,\dots,x_p)^T.$ The quantity in (\ref{logbase}) can be alternatively written as 
\begin{align}
    Z_i^\kcl = \G\log X_i^\kcl, \; i=1,\dots,\nk,\;\kt,
\end{align}
where $\G = I_p - \frac{1_p1_p^T}{p}$, with $I_p$ the $p\times p$ identity matrix and $1_p$ a $p\times1$ vector of ones.

Due to the scale invariance of the centered log ratios, we can write
\begin{equation} \label{GV}
    Z_i^\kcl = \G V_i^\kcl, \; i=1,\dots,\nk,\;\kt.
\end{equation}
Let, $\Exp(Z_i^\kcl) = \nu^\kcl.$ Then we have
\begin{align} \label{Gv}
    \nu^\kcl = \G\mu^\kcl, \;\kt.
\end{align}
The $p\times p$ ordered matrix $\G$ has rank $(p-1)$ and the null space of $\G$ is $\mathcal{N}(\G) = \{ x \in \R^p : \G x = 0_p\} = \{ c1_p: c\in \R \},$ where $0_p$ is a $p\times 1$ vector of zeroes. Therefore, it follows that $\nu^\kone = \nu^\ktwo$ iff $\mu^\kone = \mu^\ktwo + c1_p$ for some $c\in \R$ and an equivalent hypothesis to (\ref{H01}) is
\begin{align}\label{H0}
    H_{03}: \nu^\kone = \nu^\ktwo \text{ vs. } H_{a3}: \nu^\kone \neq \nu^\ktwo.
\end{align}
\subsubsection{The Two-sample Test}

 For $\kt,$ we define the normalized sample means as $S_\nk^\kcl = n_k^{1/2}\bar{Z}^\kcl = n_k^{-1/2}\snk Z_i^\kcl$. Let, $\lambda_{1} = {n_2}^{1/2}(n_1+n_2)^{-1/2}$ and $\lambda_{2} = {n_1}^{1/2}(n_1+n_2)^{-1/2}$. For testing the hypothesis in (\ref{H0}), we propose a test statistic $T_{n,2}$ that rejects $H_{03}: \nu^\kone = \nu^\ktwo$ at significance level $\alpha \in (0,1),$ if
\begin{align*}
    T_{n,2} = \underset{1\leq j \leq p}{\max}{\left| \lambda_1 S_{\none j}^\kone - \lambda_2 S_{\ntwo j}^\ktwo  \right|} \geq \CBt,
\end{align*}
where $\CBt$ is a data-driven critical value, its exact expression will be specified later in \eqref{Twoboot}.

    To facilitate the inference procedure, we adopt the empirical bootstrap method for high-dimensional data introduced by \citet{chernozhukov2017clt}  to approximate the distribution of the test statistic $T_{n,2}.$ Suppose $(Z_1^{*\kone},\dots,Z_\none^{*\kone})$ are independent and identical samples from the empirical distribution of $Z^\kone$ and $(Z_1^{*\ktwo},\dots,Z_\ntwo^{*\ktwo})$ are independent and identical samples from the empirical distribution of $Z^\ktwo$. Define $\Prob_*$ and $\Exp_*$ as the conditional probability and conditional expectation on the data, respectively. As $\Exp_*(Z_1^{*\kcl}) = \bar Z^\kcl,\;\kt,$ we define the normalized sum for the two population as 
    \begin{align*}
        S_\nk^{*\kcl} = \nk^{-1/2} \snk (Z_i^{*\kcl} - \bar{Z}^{\kcl}),\;\kt.
    \end{align*}
    The distribution of $T_{n,2}$ can be approximated by its empirical bootstrap version $T_{n,2}^*$, as
    \begin{equation}\label{Twoboot}
        T_{n,2}^* = \max_{1\leq j\leq p} \Big| \lambda_1 S_{\none j}^{*\kone} - \lambda_2 S_{\ntwo j}^{*\ktwo}  \Big|,
    \end{equation}
and $\CBt = \inf\{ x\in \R : \Prob_*( T_{n,2}^* \leq x) \geq 1-\alpha  \}$.
In the following remark, we compare it with other popular bootstrap strategies and justify the choice to adopt the empirical bootstrap method. Details of the bootstrap algorithm for $\CBt$ is discussed in Section \ref{algo}. 
\begin{remark}
   {\rm{: The empirical bootstrap method proposed here preserves the zero-sum constraint of the CLR transformation of the compositional data. Other popular bootstrap methods like the Gaussian multiplier bootstrap adopted for two-sample testing of means  \cite{xue2020distribution} and \cite{chakraborty2023multiplier} are not applicable here because the multiplier bootstrap fails to preserve the CLR structure of the compositional samples.  This can be shown by the fact that, for standard Gaussian multipliers $e_1, e_2,\cdots e_{n_1},$  $\sum_{j=1}^p e_i (Z_{ij}^{(1)} - \bar{Z}_j^{(1)}) \neq 0,$ with probability one, whereas in the case of empirical bootstrap, $\sum_{j=1}^p (Z_{ij}^{*\kone} - \bar{Z}_j^{\kone}) = 0$ with probability one, for all $i=1,\dots,\none.$ 
     \cite{wu2024multisample} considered a studentized version of the $T_{n, 2}$ and adopted the parametric bootstrap strategy for estimating the studentized variances. Such a strategy may not be helpful for compositional data, as these estimates of the sample covariance matrices may tend to be unstable for singular covariance matrices in high dimensions.}}
\end{remark}  

The EBC test can be viewed as a hybrid of the max-norm CLR statistic and empirical bootstrap that preserves the zero-sum constraint induced by the CLR transformation and delivers valid, data-driven inference in microbiome studies where $p\!\gg\!n$ and covariances may differ across the groups. 
From a practical standpoint, it improves detection of clinically relevant community shifts between patient groups (e.g., cases vs.\ controls, age/race strata) while controlling type I error and power.

\subsection{Paired-sample Test} \label{paired}
Along with the two-sample test described in Subsection \ref{twotest}, it would be of interest for many practical scenarios to perform the test described in \eqref{H01} when the data are observed in paired form. For example, for a group of smokers, one may be interested in analyzing the mean levels of microbiomes found in the left and right nasopharynx. Examples of such datasets can be found in \cite{charlson2010disordered}. 

Suppose $\{ (X_1^\kone, X_1^\ktwo),\dots,(X_n^\kone, X_n^\ktwo)\}$ is a collection of $n$ sets of paired data in $\mathcal{S}^{p-1}$ that arises from the basis vectors $\{ (W_1^\kone, W_1^\ktwo),\dots,(W_n^\kone, W_n^\ktwo)\}.$ We further denote the log-basis vectors as $\{ (V_1^\kone, V_1^\ktwo), \dots,(V_n^\kone, V_n^\ktwo)\}$, where $V_{i}^\kcl$ is defined in Subsection \ref{twotest}.

  For testing (\ref{H0}), we propose a bootstrap-based paired-sample mean test that rejects $H_{03}: \nu^\kone = \nu^\ktwo$ at significance level $\alpha \in (0,1),$ if
\begin{align*}
    T_{n,pair} = \underset{1\leq j \leq p}{\max}\; 2^{-1/2} {\left| S_{n j}^\kone - S_{n j}^\ktwo  \right|} \geq \CBp,
\end{align*}
where $\CBp$ can be obtained by calculating $T_{n,pair}^*$  analogously as in  \eqref{Twoboot}.

Similar to the Two-sample test, the proposed paired-sample test exploits within-subject contrasts to remove inter-individual heterogeneity while \emph{preserving the compositional simplex} via CLR and an empirical bootstrap. This delivers accurate, data-driven inference for high-dimensional paired microbiome comparisons (e.g., left vs.\ right site, pre–post intervention), improving power and type I error control.

\subsection{Extension to Multi-sample Test} \label{multitest}
In this section, we extend our methodology for carrying out the test when the observations are divided into multiple independent groups. In many applications, researchers are interested in drawing inferences from microbiome data in more than two ecosystems. Examples of such datasets can be found in \cite{yatsunenko2012human}, where the authors studied the gut microbiota data for three countries: the U.S., Malawi, and Venezuela. For compositional data in one dimension, similar problems have been explored in \cite{lin2020ancombc}.

Suppose $K\geq 2$ denotes the number of independent groups under study. For $k=1,\dots,K,$ let $X^\kcl = (X_1^\kcl, \dots, X_\nk^\kcl)^T$ be a collection of $\nk$ independent compositions from $\mathcal{S}^{p-1}$ for the $k$th population.  We denote the total sample size from the population to be $N = n_1+\dots+n_K$. Denote $W^\kcl = (W_1^\kcl, \dots, W_\nk^\kcl)^T,\;\kK$ as the unobserved bases that generates the observed compositional data $X^\kcl,\;\kK$, by the normalization, for $i=1,\dots,\nk,\;j=1\dots,p,\;\kK$
\begin{align*}
    X_{ij}^\kcl = W_{ij}^\kcl\Big{/}\sd W_{ij}^\kcl,
\end{align*}
where $X_{ij}^\kcl$ and $W_{ij}^\kcl$ are the $j$-th elements of $X_i^\kcl$ and $W_i^\kcl$ respectively.

Let $V^\kcl = (V_1^\kcl, \dots, V_\nk^\kcl)^T$ be the log-basis vectors, where $V_{ij}^\kcl = \log W_{ij}^\kcl,\;\kK.$ For $\kK,$ suppose that $V_1^\kcl,\dots, V_\nk^\kcl$ are independent samples, each from a distribution with mean $\mu^\kcl = (\mu_1^\kcl, \dots, \mu_p^\kcl)^T$ and covariance matrix $\Sigma_i^\kcl = (\sigma_{i, jj'}^\kcl)_{jj'=1}^p,\;i = 1,\dots,n_k.$ Note that, the hypothesis 
\begin{equation*}
    \begin{split}
    & H_{04}: \mu^\kone = \dots = \mu^\kk \text{ vs. } H_{a4}: \mu^{(k)} \neq \mu^{(l)},\text{for some } \kKK,
        \end{split}
\end{equation*}
is not testable through the observed compositional data $X^\kcl,\kK$, similar to as in two-sample case.
    Instead, we can test the null hypothesis 
\begin{equation}  \label{H03}
    \begin{split}  
        H_{05}: \mu^{(k)} & = \mu^{(l)} + c_{k,l}1_p, \text{ for some $c_{k,l}\in\R$, and for any } 1\leq k<l\leq K  \text{ vs. } \\
        H_{a5}: \mu^{(k)} & \neq \mu^{(l)} + c_{k,l}1_p, \text{ for any } c_{k,l}\in\R \text{ and for some }1\leq k<l\leq K,
    \end{split}
\end{equation}
by defining the compositional equivalence of log-basis vectors for $K$ samples. 

\begin{definition}
    A set of $K$ log-basis vectors $z_1,\dots, z_k$ are said to be compositionally equivalent if their components differ by some constants, i.e. $z_k = z_l + c_{k,l} 1_p,\;k \neq l \in \{1,\dots, K\},$ where $c_{k,l}\in \R.$ 
\end{definition}

For the observed compositional data $X^\kcl,\kK$, define the CLR transformed matrices $Z^\kcl = (Z_1^\kcl, \dots, Z_\nk^\kcl)^T$ as defined in \eqref{logbase}.

Using similar arguments as in \eqref{GV} and \eqref{Gv}, we can write $Z_i^\kcl = \G V_i^\kcl$, and $\Exp(Z_i^\kcl) = \nu^\kcl = \G\mu^\kcl,\; i=1,\dots,\nk,\;\kK$. Therefore, for any $1\leq k< l \leq K$, $\nu^{(k)} = \nu^{(l)}$ iff $\mu^{(k)} = \mu^{(l)} + c_{k,l}1_p$ for some $c_{k,l}\in \R.$ An equivalent hypothesis to (\ref{H03}) is
\begin{align}\label{H03p}
    H_{06}: \nu^\kone = \dots = \nu^\kk \text{ vs. } H_{a6}: \nu^\kcl \neq \nu^\kclp, \kKK.
\end{align}
Define, for $\kK,$ $S_\nk^\kcl = n_k^{1/2}\bar{Z}^\kcl = n_k^{-1/2}\snk Z_i^\kcl$ and for $1\leq k\neq l \leq K$, $\lambda_{k,l} = {n_l}^{1/2}(n_k + n_l)^{-1/2}.$ Then for testing (\ref{H03}), we propose a $K$-sample test that rejects $H_{06}: \nu^\kone = \dots = \nu^\kk$ at significance level $\alpha \in (0,1),$ if
\begin{align*} 
    T_{n,K} = \underset{\kKK}{\max}\;\; \underset{1\leq j \leq p}{\max}\;\; {\Big| \lambda_{k,l} S_{\nk j}^\kcl - \lambda_{l,k} S_{n_{l} j}^\kclp \Big|} > \CBk,
\end{align*}
where $\CBk$ is a data-driven critical value and its calculation is given in Section \ref{algo}. 

    Note, that the limiting distribution of our test statistics $T_{n, K}$ has unknown covariance, thus cannot be used directly for inference. Like before, we adopt the empirical bootstrap method proposed by Chernozhukov et al. (2017) for high-dimensional data to approximate the distribution of the test statistic $T_{n, K}.$ Suppose for $\kK,$ $(Z_1^{*\kcl},\dots, Z_\nk^{*\kcl})$ are independent and identical samples from the empirical distribution of $Z^\kcl$. As $\Exp_*(Z_1^{*\kcl}) = \bar Z^\kcl,\;\kK,$ we define the normalized sum for the $K$ populations as 
     \begin{align*}
        S_\nk^{*\kcl} = \nk^{-1/2} \snk (Z_i^{*\kcl} - \bar{Z}^{\kcl}),\;\kK.
    \end{align*}
    The distribution of $T_{n,K}$ can then be approximated by its bootstrap version $T_{n,K}^*$, where
    \begin{equation*}
        T_{n,K}^* = \max_{\kKK} \max_{1\leq j\leq p} \Big| \lamk S_{\nk j}^{*\kcl} - \lamkk S_{\nkk j}^{*\kclp}  \Big|,
    \end{equation*}
and $\CBk = \inf\{ x\in \R : \Prob_*( T_{n,K}^* \leq x) \geq 1-\alpha  \}$.


Inference across multiple groups with compositional data is intrinsically difficult: contrasts are defined only up to additive constants, covariance structures may differ between groups, and the number of pairwise comparisons grows quickly with $K$, often in regimes where $p \gg n$ and designs are unbalanced. We propose a $K$-sample test that aggregates the largest pairwise CLR mean differences through a max--max statistic, and calibrates a single, data-driven critical value using an empirical bootstrap that respects the zero sum constraint. The procedure controls size and achieves improved power without relying on fragile covariance estimation or extreme-value approximations, providing an ANOVA-type global test for high-dimensional compositional data.
  
\section{Computation and Implementation: Data-Driven Critical Values via Empirical Bootstrap }\label{algo}

In this subsection, we first describe the procedure for computing the critical value $\CBk$ using empirical bootstrap for compositional data, and then state the testing procedure for the null hypothesis in \eqref{H03}.

  
  \textbf{Step 1:} For the given samples $X^\kcl = \{ X_1^\kcl,\dots,X_\nk^\kcl \},$ calculate $Z^\kcl = \{ Z_1^\kcl,\dots,Z_\nk^\kcl \},$ using centered log-ratio (CLR) transformation $Z_i^\kcl = \G \log(X_i^\kcl), i = 1,\dots,\nk,\;\kK.$
  
  \textbf{Step 2: } Calculate the test statistic \begin{equation*}
      \begin{split}
          T_{n,K} &= \underset{\kKK}{\max}\;\; \underset{1\leq j \leq p}{\max}\;\; {\Big| \lambda_{k,l} S_{\nk j}^\kcl - \lambda_{l,k} S_{n_{l} j}^\kclp \Big|}  = \underset{\kKK}{\max}\;\; \underset{1\leq j \leq p}{\max}\;\; {\Bigg| \frac{\sqrt{\nk\nkk} (\Bar{Z}^\kcl_j -  \Bar{Z}^\kclp_j) }{\sqrt{\nk+\nkk}} \Bigg|}.
      \end{split}
  \end{equation*}
  
  \textbf{Step 3: } For $b=1,\dots,B$ and $\kK,$ generate multinomial samples with equal probabilities, $\xi_{(b)}^\kcl = (\xi_{1(b)}^\kcl, \dots, \xi_{\nk(b)}^\kcl) \sim \text{ multinomial }(\nk;1/\nk,\dots,1/\nk)$ and given the data calculate 
  \begin{equation*}
      \begin{split}
          S_{\nk(b)}^{*\kcl} = \nk^{-1/2} \snk \xi_{i(b)}^\kcl (Z_i^{\kcl} - \bar{Z}^{\kcl}).
      \end{split}
  \end{equation*}
  
  \textbf{Step 4: } The empirical bootstrap version of $T_{n,K}$ is,
  \begin{equation*}
    \begin{split}
         T_{n,K (b)}^* &= \max_{\kKK} \max_{1\leq j\leq p} \Big| \lamk S_{\nk j (b)}^{*\kcl} - \lamkk S_{\nkk j (b)}^{*\kclp}  \Big| = \underset{\kKK}{\max}\;\; \underset{1\leq j \leq p}{\max}\;\; {\Bigg| \frac{\sqrt{\nk\nkk} (\Bar{Z}^{*\kcl}_{j(b)} -  \Bar{Z}^{*\kclp}_{j(b)}) }{\sqrt{\nk+\nkk}} \Bigg|},
    \end{split}
  \end{equation*}
  where $\Bar{Z}^{*\kcl}_{(b)} = \nk^{-1/2} S_{\nk(b)}^{*\kcl}, \kK.$
  
\textbf{Step 5: } Use the $100(1-\alpha)th$ quantile of $\{ T_{n,K(1)}^*,\dots,T_{n,K(B)}^* \}$ to approximate $\CBk$. 

This algorithm can be seen as an adaptation of the empirical bootstrap in \cite{efron1979bootstrap} for compositional data in high-dimensional settings. For implementation, it is natural to compute the critical value of $\CBk$ via Monte Carlo simulations by $\hat{Q}_{\alpha, K} = \inf\{ t \in \R: \hat{F}_B(t) \geq 1 -\alpha\}, \text{ where } \hat{F}_B(t) = B^{-1} \sum_{b =1}^B I(T_{n, K (b)}^* \le t)$ and $T_{n, K(1)}^*,\dots, T_{n, K(B)}^*$ are B independent realizations of $T_{n, K (b)}^*$ from Step 3 and $I(.)$ denotes the indicator function.Finally, for a given significance level $\alpha \in (0,1),$ we reject the null hypothesis in \eqref{H03} (or \eqref{H03p}) if $T_{n,K} \ge \CBk$; otherwise, we do not reject it.
\section{Assumptions for Theoretical Results}\label{assum}
 In this section, we will state the assumptions which we require to prove the main theoretical results that establish the consistency of the proposed bootstrap tests.  
 
Before stating the assumptions, we define a few quantities that we need while stating the assumptions.
Let $\sigma_{jj'}^\kcl$ be the $(j,j')$th entry of $\Sigma^\kcl=\nk^{-1}\snk \Sigma_i^\kcl,\;\kK$. For brevity,  we  denote any quantity $R_{n_1,\dots,n_K}$ that depends on $n_1,\dots,n_K,$ by $R_n.$  It is worthwhile mentioning that, although we allow $p$ to diverge with $n,$ the theoretical guarantees still continue to hold for all $p\geq 3$.

Consider the following assumptions:
\begin{enumerate}
    \item[(A.1)]\label{B1} For some universal constants $0<c_1<c_2<1,$ $\nk(\nk+\nkk)^{-1} \in (c_1, c_2),$ for all $\kKK.$
    \item[(A.2)]\label{B2} There exists a universal constant $b>0$ such that for $\kK,$ and for all $p\geq 3,$ $$\underset{1\leq j \leq p}{\min} \sigma^\kcl_{jj} > b,$$
    and there exists a sequence of positive numbers $\{a_p: p\geq 1\}$ such that $a_p \to \infty$ as $p \to \infty$, and universal constant $\delta \in (0,1)$ such that for all $1\leq j\leq p$ and $p\geq 3,$ we have $2(1+a_p) \leq \delta p$ and
    \begin{align*} 
    \sum_{\underset{i\neq j}{i=1}}^{p} \sigma_{ij}^\kcl \leq a_p\sigma_{jj}^\kcl, \end{align*} for $\kK.$
    \item[(A.3)]\label{B4} There exists a sequence $B_{n} \geq 1$ such that for $\ell=1,2$ and $\kK$ $$\underset{1\leq j \leq p}{\max} \nk^{-1} \snk \Exp \left(|V_{ij}^\kcl - \mu_j^\kcl|^{2+\ell}\right) \leq B_{n}^\ell.$$
    \item[(A.4)]\label{B5} For the same sequence $B_{n}$ defined in (A.3), $$ \underset{ 1\leq k\leq K }{\max}\; \underset{1\leq i \leq \nk}{\max}\; \underset{1\leq j \leq p}{\max} \Exp \left\{\exp\left( B_{n}^{-1}|V_{ij}^\kcl - \mu_j^\kcl| \right) \right\} \leq 2.$$
    \item[(A.5)]\label{B5p} For the same sequence $B_{n}$ defined in (A.3), $$ \underset{ 1\leq k\leq K }{\max}\; \underset{1\leq i \leq \nk}{\max}\; \Exp \left\{\underset{1\leq j \leq p}{\max} \Big(B_{n}^{-1} |V_{ij}^\kcl - \mu_j^\kcl|\Big)^q \right\} \leq 2,$$ for $q>4.$
    \item[(A.6)]\label{B7} Let $({\mu^\kone}^T,\dots,{\mu^\kk}^T)^T\in \mathcal{F}_{n, p},$ where for some universal constant $C_s,$ 
    \begin{multline*} 
    \mathcal{F}_{n,p} := \Bigg\{ ({\mu^\kone}^T,\dots,{\mu^\kk}^T)^T \in \R^{Kp}: \underset{\kKK}{\max} \Big\{ \max_{1\leq j\leq p}|\mu_j^\kcl - \mu_j^\kclp| - p^{-1} \sum_{j = 1}^p |\mu_j^\kcl - \mu_j^\kclp| \Big\} \\ \geq C_s \Big \{ N^{-1} B_{n} \log(Np)\Big\}^{1/2} \Bigg\}.\end{multline*}  
\end{enumerate}
    Assumption (A.1) requires the relative proportion of the sample sizes to lie in an open interval. Assumption (A.2) says that the diagonal elements of the covariance matrices have to be bounded below by a constant, along with the fact that the scaled sum of the off-diagonal elements of the population covariance matrices is less than that of its diagonal elements. Assumption (A.2) simultaneously unearths the correlational nature of the log-basis vectors by specifying the structural interplay between its variance and covariance components and preventing the degeneracy of the log-basis vectors. It also accounts for the dependence induced by the CLR transformation, i.e., $\sum_{i=1}^p Z^{(k)}_{ij} = 0$, and plays an important role for proper calibration of the bootstrap critical values. In the existing high-dimensional bootstrap literature, \cite{chernozhukov2017clt, xue2020distribution, chakraborty2023multiplier} require only the variance components to be lower bounded, which in itself is not sufficient for compositional data. For example, if we consider a common covariance matrix of the log-basis vectors across all the populations denoted by $\Sigma_p = (1-\rho)I_p + \rho 1_p1_p^T,$ with $\rho = 1 - p^{-1},$ then we see that the diagonal elements of $\Sigma_p$ are positive. However, the diagonal elements of the covariance matrix of the centered log-ratio vectors denoted by $\G \Sigma_p \G^T$, become zero as $p$ increases with $n$. Thus, the theoretical guarantees obtained by \cite{xue2020distribution, chakraborty2023multiplier} become invalid for the above choice of $\Sigma$.
    
    Alternatively, compared to the existing literature in \cite{cao2017compositional, li2022maximum}, Assumption (A.2) is fairly relaxed in the sense that it allows for the extreme values of the off-diagonal elements of the correlation matrix. 

    This fact would be made clearer if we consider the common covariance matrix 
    $$\Breve{\Sigma}_p  =\begin{pmatrix}
        \Sigma_d & 0_{d\times (p-d)}\\
        0_{(p-d) \times d} & I_{p-d}
    \end{pmatrix},$$
    with $\rho = 1 - p^{-1},$ $1\leq d \leq \lfloor \epsilon p \rfloor,$ where $\epsilon\in(0,1/2)$ and then by choosing $b=1,$ $a_p = \epsilon p-1$ and $\delta = 2\epsilon,$ we see that $\Breve{\Sigma}_p$ satisfies the Assumption (A.2) and the diagonal elements of $\G \Breve{\Sigma}_p \G^T$ are bounded away from zero. The analogous assumption in \cite{cao2017compositional, li2022maximum} requires the absolute values of all the off-diagonal elements of the correlation matrices to be bounded away from one, which excludes matrices like $\Breve{\Sigma}_p$ when $p$ is increasing with $n$. Assumption (A.3) requires the log-basis vectors to have third and fourth moments bounded by the sequences $B_n$ and $B_n^2$, respectively. Finally, Assumption (A.4) and Assumption (A.5) require the tails of the distribution of the log-basis vectors to have sub-exponential or polynomial decay, thus incorporating a wide of range of distributions for practical purposes. In all of the above assumptions, the sequence $B_n$ can diverge with $n$ and $p$.  
  
\section{Theoretical Guarantees and Practical Significances}\label{thr}

For clarity of exposition, we summarise the main guarantees and their practical implications here, with technical assumptions and proofs deferred to Section~\ref{App}. In brief, the proposed empirical bootstrap tests (i) control type~I error at the nominal level under the null (Theorem~1), and (ii) attain power approaching one against relevant alternatives (Theorem~2), even when $p$ is large relative to $n$, groups are unbalanced, and covariance structures differ across groups.


\medskip

\noindent\textbf{Theorem 1 (Valid size under the null):} Under the regularity conditions stated in  Assumptions~A.1--A.4 or A.1--A.3 and A.5, the $K$-sample max–norm CLR test with empirically bootstrapped cutoff $\CBk$ satisfies
\[
\Prob\!\left(T_{n,K}>\CBk\;\middle|\;H_{05}\right)\to \alpha \quad \text{as } N\to\infty.
\]

\medskip
\noindent\textbf{Theorem 2 (Consistency under alternatives):}
Under the same moment and tail conditions as in Theorem 1 along with Assumption~A.6, the test $T_{n,K}$ satisfies
\[
\Prob\!\left(T_{n,K}>\CBk\;\middle|\;H_{a5}\right)\to 1 \quad \text{as } N\to\infty.
\]
\begin{remark}
    {\rm{: The bootstrap calibration provides a data-driven way to compute calibrated $p$-values and hence to control the false positive rate at the nominal significance level. In the present context, this is particularly important when comparing microbial communities across clinical groups, such as case-control studies or multi-arm cohorts, where high dimensionality and compositional constraints can otherwise lead to spurious discoveries. The accompanying power result shows that clinically meaningful differences in community composition are detected with probability tending to one as the sample size increases, while maintaining type~1 error control.

It is also worth emphasizing that the size and power guarantees do not require additional structural assumptions on the covariance matrix or on the sparsity pattern of the observed vectors. Consequently, the theory allows for settings in which some taxa are rarely observed and the resulting microbial abundance vectors may be highly sparse. This feature is important for rare-taxa regimes, which are examined further in Section~\ref{simu}.}}
\end{remark}

\smallskip

\noindent\textbf{From Theory to Practice: Implications for Clinical and Public Health Studies:} Many clinical microbiome studies involve thousands of taxa and modest sample sizes, leading to challenging high-dimensional settings where $p$ may be comparable to, or much larger than, $n$. Theorems~1--2 provide theoretical support for the proposed test in such regimes by establishing valid type I error control and power consistency under broad conditions. Importantly, these guarantees do not require equal covariance structures across groups and allow unbalanced sample sizes, features that are common in multi-center cohorts and pragmatic clinical studies. Because calibration is performed through an empirical bootstrap that respects the constraints of the compositional data, the resulting inference is well suited to relative-abundance data and helps reduce the risk of conclusions driven by compositional artifacts. More broadly, valid global tests across multiple clinical groups provide a principled basis for prioritizing subsequent analyses, including targeted taxon-level investigations, subgroup evaluations, and follow-up studies relevant to risk stratification and intervention development in colorectal cancer, preterm birth, and related health conditions.



\section{Simulations for Clinical-Scale Study Designs}\label{simu}

In this section, we report the findings of finite-sample studies comparing the empirical level and power of the proposed test with those of popular tests in the existing literature for high-dimensional compositional data. We compare the performance of the proposed test, referred to as EBC, with two competitors: the tests proposed by  \citet{cao2017compositional} and \cite{li2022maximum}, denoted as CLL and MECAF, respectively. Both CLL and MECAF are based on the maximum norm and are designed primarily for two populations, using critical values derived from their asymptotic Gumbel distribution.

\subsection{Two-Sample Scenarios}\label{simutwo}

To make two–sample evaluations clinically meaningful, we designed three simulation scenarios that mirror common patterns seen in 16S rRNA datasets, where many taxa co-vary, only a subset shifts between groups, and measurements can be light or heavy-tailed due to biological heterogeneity and sequencing variability. We vary cohort sizes typical of observational studies and dimensions typical of ASV (Amplicon Sequence Variants) or OTU (Operational Taxonomic Units) level analyses, so the results are practical rather than purely theoretical.

Across all settings we include both light and heavy-tailed outcomes to reflect outliers and overdispersion, and we use realistic sample sizes with very high dimensionality. We further allow only a sparse fraction of taxa to differ between groups, reflecting the empirical reality that a few features often drive clinically relevant shifts.
Together, (M.1)–(M.3) provide a practical, complementary suite that probes level, power, and robustness under conditions clinicians and biostatisticians routinely face. We performed 1000 bootstrap resamples to determine the cutoff values for the EBC test and performed 1000 iterations to estimate the empirical level and power of all two-sample tests.

To compare the finite sample performance of the three tests viz. EBC, CLL and MECAF, we consider two distributions with differing tail behaviors: the normal distribution (light-tailed) and the $t$-distribution with 5 degrees of freedom (heavy-tailed). We first outline the simulation setup to compare two populations. For $k=1,2$ and $i=1,\dots,n_k,$ we generate $p$-dimensional random vectors $Y_i^{(k)}$ where each component is either independently and identically distributed as a standard normal or a standardized $t$-distribution with 5 degrees of freedom. Thus, the vectors $Y_i^{(k)}$ have mean vector $0_p$ and covariance matrix $I_p.$

Next, we generate the latent log-basis vectors
\[
V_i^{(k)}=\mu^{(k)}+{\Sigma^{(k)}}^{1/2}Y_i^{(k)},
\]
where $\mu^{(k)}$ and $\Sigma^{(k)}$ denote the mean vector and covariance matrix respectively. The corresponding latent basis vectors are then generated as
$$W_{ij}^{(k)}=\exp\left(V_{ij}^{(k)}\right), \qquad j=1,\dots,p.$$

To mimic realistic microbiome sequencing data, we subsequently generate observed compositional count vectors from the latent basis vectors. Specifically, for each sample we define the latent composition
\[
\pi_{ij}^{(k)}
=
\frac{W_{ij}^{(k)}}{\sum_{m=1}^{p}W_{im}^{(k)}},
\]
and generate observed counts
\[
C_i^{(k)}
\sim
\text{Multinomial}(N_i^{(k)},\pi_i^{(k)}),
\]
where the sequencing depths $N_i^{(k)}$ are independently generated from a negative binomial distribution to mimic the over-dispersion commonly observed in microbiome sequencing studies. Specifically, for each dimension $p,$ the sequencing depth parameters are chosen as
\[
(\mu_N,\theta_N)=
\begin{cases}
(3.0p,\;0.05p), & p=200,\\
(1.5p,\;0.02p), & p=500,\\
(1.5p,\;0.02p), & p=1000,
\end{cases}
\]
where $\mu_N$ and $\theta_N$ denote the mean and dispersion parameters, respectively. Thus,
\[
N_i^{(k)} \sim NB(\mu_N,\theta_N),
\]
allowing the simulated count data to exhibit realistic variability in sequencing depths across samples. The sequencing depths are generated independently of the latent microbial composition, reflecting variability arising from the sequencing process rather than biological differences. To further reflect the biological and technical sparsity commonly observed in microbiome sequencing studies, we randomly introduce additional zeros into the count matrix. For each entry, an observed count is set to zero independently with probability $\eta,$ where $\eta \in \{0.30, 0.45, 0.60\}.$ These settings correspond to moderate ($45\%$) and high levels ($75\%$) of zero inflation frequently encountered in practice.

Since the centered log-ratio (CLR) transformation is only defined for strictly positive compositions, we first apply a zero-replacement procedure to the observed count matrix. In particular, we use the count-zero multiplicative (CZM) imputation method implemented in the \texttt{zCompositions} package \citep{Rzcomp} available on \texttt{CRAN}. We also investigated alternative zero-replacement procedures, including geometric Bayesian multiplicative replacement (GBM) in \cite{GBM}. However, the CZM approach produced more stable finite-sample performance across the simulation settings considered here. We therefore report the results based on CZM.

After zero replacement, we normalized to unit sum (closure) to obtain observed compositions
\[ X_{ij}^{(k)} = \frac{\widetilde C_{ij}^{(k)}}{\sum_{m=1}^{p}\widetilde C_{im}^{(k)}}, \]
where $\widetilde C_{ij}^{(k)}$ denotes the imputed counts. Finally, we compute the centered log-ratio transformed vectors
\[ Z_i^{(k)} = G_p\log X_i^{(k)}, \]
which are subsequently used for all testing procedures. Therefore, unlike idealized latent-variable simulations, the proposed setup incorporates the complete practical analysis pipeline, including sequencing variability, sparsity, zero inflation, zero imputation, closure, and CLR transformation.

For brevity, we denote the resulting observed compositional distributions generated from the latent Gaussian and latent $t_5$ log-basis vectors by $D_1$ and $D_2,$ respectively.

We considered two sets of sample sizes
\[ N_1=(n_1,n_2)=(50,60) \qquad \text{and} \qquad N_2=(n_1,n_2)=(100,120), \]
for varying dimensions $p=200,500,$ and $1000$. We set $s=0.1$ as the sparsity level of the signal, so that $\lfloor ps \rfloor$ taxa exhibit differential abundance between the groups under the alternative hypothesis, while the remaining taxa have identical mean components. For each simulation setting, the signal support and the corresponding nonzero mean shifts are generated once and then kept fixed across all Monte Carlo iterations. For testing the hypothesis
\[ H_{03}:\nu^{(1)}=\nu^{(2)} \qquad \text{vs.} \qquad H_{a3}:\nu^{(1)}\neq \nu^{(2)}, \]
the mean vector of the first population is taken as $\mu^{(1)}=0_p.$ For the alternative hypothesis, we use a separation parameter $\delta,$ which takes values
\(\delta=(0,0.25,0.5,0.75,1). \)
Here, $\delta=0$ corresponds to empirical level calculations, while the remaining values are used to assess empirical power. Under the alternative hypothesis, $\lfloor ps \rfloor$ entries of the vector $\mu^{(2)}$ are independently generated from a $Uniform(-2\delta,2\delta)$ distribution, while the remaining coordinates are set equal to zero. For simplicity, we take the additive constant $c$ in the null hypothesis formulation to be zero.

For $d=p/4,$ we denote by
\[ CS=CS(M_1,M_2,M_3,M_4) \]
a $p\times p$ symmetric block matrix partitioned into a $4\times4$ structure, where each block is of dimension $d\times d.$ The diagonal blocks are given by
\[ BlockDiag(CS)=(M_1,M_2,M_3,M_4). \]
We consider the following matrices:
\[ AR_\rho=(\rho^{|i-j|})^{d\times d}, \qquad EQ_\rho=(1-\rho)I_d+\rho\mathbf 1_d\mathbf 1_d^T, \]
along with
\[ U_1=\frac1d I_d, \qquad U_2=0_d,\qquad U_3=-\frac1d I_d.\]
The upper off-diagonal blocks are defined by
\[(CS)_{l,l+1}=U_1,\quad l=1,2,3,\; (CS)_{l,l+2}=U_2, l=1,2,\;
\text{ and } (CS)_{1,4}=U_3.\]

To account for heterogeneous dependence structures commonly observed in microbiome communities, we consider the following covariance configurations:

\noindent\textbf{(M.1) Clustered communities with differing cohesion:} Microbiome data often contain “modules” of co-occurring taxa (e.g., clades sharing niches or metabolism). Under (M.1), the covariance structure is of the form $$\Sigma^{(k)} = CS(EQ_{0.999},EQ_{0.9},EQ_{0.7},EQ_{0.5}),k=1,2.$$ Setting~M.1 recreates this by forming a few community clusters that differ in how tightly their members move together—from very strong to moderate cohesion. This setting represents comparisons like healthy vs.\ case cohorts where overall network architecture is similar across groups, and only a sparse subset of taxa (modules) shifts in mean abundance. It tests whether methods maintain level and detect subtle, modular signals when correlations are strong and stable. 

\noindent\textbf{(M.2) Smooth, local dependence with near-independent taxa:} Not all co-variation is clustered; taxa can change together along phylogenetic or functional gradients, while other taxa vary more independently. Under (M.2), the covariance structure is of the form $$\Sigma^{(k)} = CS(AR_{0.5},AR_{0.3},AR_{0.1},I_d), k=1,2.$$ Setting~M.2 mimics this mix of short-range dependence plus near-independence, as observed when closely related genera track environmental drivers but many background taxa do not. This reflects routine case–control analyses where weak, distributed correlations coexist with pockets of structure. It assesses robustness in settings where dependence is diffuse and signals are modest, conditions that often lead to inflated false positives or diminished power. 


\noindent\textbf{(M.3) Group-specific reorganization (heterogeneous covariance):} 
Disease, treatment, or demographic factors may rewire microbial networks by strengthening certain associations and weakening others, even when the taxa present remain largely the same. Under (M.3), the covariance structure is of the form $$\Sigma^{(1)} = CS(EQ_{0.999},EQ_{0.9},EQ_{0.7},EQ_{0.5}), \text{ and }\Sigma^{(2)} = CS(EQ_{0.5},EQ_{0.7},EQ_{0.9},EQ_{0.999}).$$
Setting~M.3 captures this “dysbiosis as network reorganization,” with different co-variation patterns across the two groups. This directly parallels comparisons such as adenoma or CRC vs.\ controls or race-stratified pregnancy cohorts, where covariance structures differ. It examines the performance of methods under heteroscedasticity, a common violation in practice that may lead to miscalibration if not properly addressed. 

The setting (M.3) is specifically designed to assess robustness under heterogeneous covariance structures across populations, reflecting practical microbiome applications where disease status or demographic factors may alter microbial co-occurrence patterns.
We summarize our findings for the two-sample cases in Tables \ref{Table1}--\ref{Table3}. We performed 1000 bootstrap resamples to determine the cutoff values for EBC test and performed 1000 iterations to estimate the empirical level and power of all two-sample tests. Overall, the tables reveal similar patterns, though the differences within each table can vary from mild to stark, depending on the data-generating distribution, models, and sample sizes. For comparison, the nominal significance level was set to $\alpha = 0.05$.

\begin{table}[ht]
\centering
\caption{Empirical size and power under Setting (M.1) with CZM imputation and additional zero inflation level $\eta = 0.30$. The average proportion of zeros before imputation is around $45\%$--$55\%$ percent.}
\label{Table1}
\begin{adjustbox}{max width=\textwidth}
\begin{threeparttable}
\begin{tabular}{cclcccccccccc}
\toprule
 & & & \multicolumn{10}{c}{$\delta$} \\
\cmidrule(lr){4-13}
$p$ & Dist. & Method & \multicolumn{2}{c}{0} & \multicolumn{2}{c}{0.25} & \multicolumn{2}{c}{0.50} & \multicolumn{2}{c}{0.75} & \multicolumn{2}{c}{1} \\
\cmidrule(lr){4-5}\cmidrule(lr){6-7}\cmidrule(lr){8-9}\cmidrule(lr){10-11}\cmidrule(lr){12-13}
 & & & $N_1$ & $N_2$ & $N_1$ & $N_2$ & $N_1$ & $N_2$ & $N_1$ & $N_2$ & $N_1$ & $N_2$ \\
\midrule
200 & D1 & EBC & 0.040 & 0.058 & 0.075 & 0.100 & 0.125 & 0.241 & 0.328 & 0.700 & 0.586 & 0.956 \\
 &  & CLL & 0.068 & 0.066 & 0.093 & 0.109 & 0.150 & 0.250 & 0.340 & 0.703 & 0.567 & 0.961 \\
 &  & MECAF & 0.061 & 0.061 & 0.082 & 0.104 & 0.133 & 0.243 & 0.320 & 0.681 & 0.542 & 0.956 \\
\addlinespace[1pt]
 & D2 & EBC & 0.058 & 0.076 & 0.069 & 0.080 & 0.105 & 0.271 & 0.335 & 0.725 & 0.640 & 0.958 \\
 &  & CLL & 0.077 & 0.076 & 0.098 & 0.094 & 0.142 & 0.265 & 0.338 & 0.717 & 0.636 & 0.955 \\
 &  & MECAF & 0.071 & 0.076 & 0.090 & 0.087 & 0.130 & 0.258 & 0.319 & 0.703 & 0.612 & 0.953 \\
\midrule
500 & D1 & EBC & 0.055 & 0.075 & 0.052 & 0.088 & 0.135 & 0.343 & 0.413 & 0.906 & 0.825 & 0.998 \\
 &  & CLL & 0.099 & 0.089 & 0.092 & 0.106 & 0.178 & 0.331 & 0.378 & 0.842 & 0.739 & 0.997 \\
 &  & MECAF & 0.082 & 0.085 & 0.083 & 0.101 & 0.166 & 0.316 & 0.344 & 0.830 & 0.711 & 0.997 \\
\addlinespace[1pt]
 & D2 & EBC & 0.061 & 0.067 & 0.063 & 0.089 & 0.144 & 0.380 & 0.444 & 0.899 & 0.855 & 0.998 \\
 &  & CLL & 0.092 & 0.071 & 0.101 & 0.112 & 0.191 & 0.381 & 0.437 & 0.853 & 0.784 & 0.997 \\
 &  & MECAF & 0.079 & 0.064 & 0.092 & 0.103 & 0.173 & 0.363 & 0.410 & 0.839 & 0.768 & 0.996 \\
\midrule
1000 & D1 & EBC & 0.042 & 0.038 & 0.057 & 0.073 & 0.151 & 0.383 & 0.525 & 0.952 & 0.942 & 1.000 \\
 &  & CLL & 0.097 & 0.056 & 0.121 & 0.087 & 0.194 & 0.375 & 0.478 & 0.913 & 0.861 & 1.000 \\
 &  & MECAF & 0.088 & 0.051 & 0.104 & 0.081 & 0.179 & 0.357 & 0.444 & 0.905 & 0.836 & 1.000 \\
\addlinespace[1pt]
 & D2 & EBC & 0.043 & 0.047 & 0.044 & 0.068 & 0.136 & 0.416 & 0.542 & 0.958 & 0.930 & 1.000 \\
 &  & CLL & 0.104 & 0.060 & 0.114 & 0.086 & 0.198 & 0.403 & 0.517 & 0.925 & 0.848 & 1.000 \\
 &  & MECAF & 0.092 & 0.053 & 0.097 & 0.082 & 0.180 & 0.390 & 0.484 & 0.917 & 0.818 & 1.000 \\
\bottomrule
\end{tabular}
\end{threeparttable}
\end{adjustbox}
\end{table}

\begin{table}[ht]
\centering
\caption{Empirical size and power under Setting (M.2) with CZM imputation and additional zero inflation level $\eta = 0.30$. The average proportion of zeros before imputation is around $47\%$--$60\%$.}
\label{Table2}
\begin{adjustbox}{max width=\textwidth}
\begin{threeparttable}
\begin{tabular}{cclcccccccccc}
\toprule
 & & & \multicolumn{10}{c}{$\delta$} \\
\cmidrule(lr){4-13}
$p$ & Dist. & Method & \multicolumn{2}{c}{0} & \multicolumn{2}{c}{0.25} & \multicolumn{2}{c}{0.50} & \multicolumn{2}{c}{0.75} & \multicolumn{2}{c}{1} \\
\cmidrule(lr){4-5}\cmidrule(lr){6-7}\cmidrule(lr){8-9}\cmidrule(lr){10-11}\cmidrule(lr){12-13}
 & & & $N_1$ & $N_2$ & $N_1$ & $N_2$ & $N_1$ & $N_2$ & $N_1$ & $N_2$ & $N_1$ & $N_2$ \\
\midrule
200 & D1 & EBC & 0.058 & 0.070 & 0.068 & 0.080 & 0.099 & 0.223 & 0.252 & 0.654 & 0.519 & 0.940 \\
 &  & CLL & 0.087 & 0.079 & 0.098 & 0.083 & 0.130 & 0.213 & 0.289 & 0.621 & 0.512 & 0.929 \\
 &  & MECAF & 0.078 & 0.075 & 0.090 & 0.082 & 0.114 & 0.208 & 0.261 & 0.614 & 0.490 & 0.924 \\
\addlinespace[1pt]
 & D2 & EBC & 0.040 & 0.057 & 0.073 & 0.077 & 0.094 & 0.225 & 0.270 & 0.625 & 0.550 & 0.926 \\
 &  & CLL & 0.083 & 0.066 & 0.118 & 0.076 & 0.122 & 0.209 & 0.283 & 0.605 & 0.565 & 0.927 \\
 &  & MECAF & 0.076 & 0.061 & 0.109 & 0.071 & 0.110 & 0.202 & 0.263 & 0.592 & 0.539 & 0.922 \\
\midrule
500 & D1 & EBC & 0.042 & 0.068 & 0.045 & 0.074 & 0.097 & 0.293 & 0.323 & 0.806 & 0.707 & 0.997 \\
 &  & CLL & 0.078 & 0.086 & 0.107 & 0.094 & 0.169 & 0.290 & 0.330 & 0.739 & 0.630 & 0.983 \\
 &  & MECAF & 0.067 & 0.082 & 0.096 & 0.090 & 0.150 & 0.280 & 0.299 & 0.729 & 0.600 & 0.980 \\
\addlinespace[1pt]
 & D2 & EBC & 0.038 & 0.060 & 0.060 & 0.087 & 0.085 & 0.294 & 0.332 & 0.790 & 0.744 & 0.990 \\
 &  & CLL & 0.080 & 0.066 & 0.105 & 0.114 & 0.143 & 0.294 & 0.317 & 0.736 & 0.610 & 0.986 \\
 &  & MECAF & 0.063 & 0.061 & 0.091 & 0.111 & 0.123 & 0.278 & 0.285 & 0.724 & 0.582 & 0.986 \\
\midrule
1000 & D1 & EBC & 0.029 & 0.049 & 0.037 & 0.058 & 0.095 & 0.308 & 0.397 & 0.914 & 0.869 & 1.000 \\
 &  & CLL & 0.076 & 0.074 & 0.095 & 0.091 & 0.160 & 0.286 & 0.371 & 0.837 & 0.764 & 0.999 \\
 &  & MECAF & 0.066 & 0.067 & 0.080 & 0.085 & 0.145 & 0.274 & 0.345 & 0.820 & 0.733 & 0.998 \\
\addlinespace[1pt]
 & D2 & EBC & 0.045 & 0.033 & 0.039 & 0.046 & 0.112 & 0.294 & 0.394 & 0.904 & 0.843 & 1.000 \\
 &  & CLL & 0.098 & 0.058 & 0.086 & 0.103 & 0.168 & 0.307 & 0.385 & 0.838 & 0.734 & 0.998 \\
 &  & MECAF & 0.083 & 0.055 & 0.079 & 0.099 & 0.149 & 0.297 & 0.353 & 0.825 & 0.693 & 0.998 \\
\bottomrule
\end{tabular}
\end{threeparttable}
\end{adjustbox}
\end{table}


We first examine the empirical level, corresponding to the null case $\delta=0$. Here $p$ denotes the dimension, $N_1=(50,60)$ and $N_2=(100,120)$ denote the two sample-size configurations, and the nominal significance level is set to $\alpha$ = 0.05. Across Tables~\ref{Table1}--\ref{Table3}, the EBC test generally provides the most reliable control of the empirical level. In most cases, its rejection probabilities under $\delta=0$ remain close to the nominal level, with only mild departures in a few low-dimensional or finite-sample settings. For example, under Setting (M.1), the empirical level of EBC ranges from $0.040$ to $0.061$ for $N_1$ across all dimensions and distributions, and remains reasonably close to $0.05$ for $N_2$, with moderate inflation only in a few cases. Similar behaviour is observed under Settings (M.2) and (M.3), although the empirical level is slightly inflated for some combinations when $p = 200$, particularly under (M.3). In contrast, CLL and MECAF are more often liberal under the null. This is especially visible when $p = 500$ or $p = 1000$, where the empirical level of CLL frequently lies around $0.08-0.10$, while MECAF also tends to exceed the nominal level, although usually to a lesser extent than CLL. Overall, these findings suggest that the conservative behaviour of CLL and MECAF arises from their greater sensitivity to sparsity, high dimensionality, and heterogeneous covariance structures, thereby highlighting the improved robustness of EBC to zero imputation and covariance heterogeneity.
\begin{table}[ht]
\centering
\caption{Empirical size and power under Setting (M.3) with CZM imputation and additional zero inflation level $\eta = 0.30$. The average proportion of zeros before imputation is around $45\%$--$55\%$.}
\label{Table3}
\begin{adjustbox}{max width=\textwidth}
\begin{threeparttable}
\begin{tabular}{cclcccccccccc}
\toprule
 & & & \multicolumn{10}{c}{$\delta$} \\
\cmidrule(lr){4-13}
$p$ & Dist. & Method & \multicolumn{2}{c}{0} & \multicolumn{2}{c}{0.25} & \multicolumn{2}{c}{0.50} & \multicolumn{2}{c}{0.75} & \multicolumn{2}{c}{1} \\
\cmidrule(lr){4-5}\cmidrule(lr){6-7}\cmidrule(lr){8-9}\cmidrule(lr){10-11}\cmidrule(lr){12-13}
 & & & $N_1$ & $N_2$ & $N_1$ & $N_2$ & $N_1$ & $N_2$ & $N_1$ & $N_2$ & $N_1$ & $N_2$ \\
\midrule
200 & D1 & EBC & 0.072 & 0.077 & 0.081 & 0.086 & 0.125 & 0.272 & 0.299 & 0.688 & 0.641 & 0.963 \\
 &  & CLL & 0.090 & 0.082 & 0.106 & 0.094 & 0.150 & 0.274 & 0.326 & 0.651 & 0.605 & 0.957 \\
 &  & MECAF & 0.084 & 0.077 & 0.096 & 0.087 & 0.137 & 0.260 & 0.300 & 0.644 & 0.577 & 0.955 \\
\addlinespace[1pt]
 & D2 & EBC & 0.055 & 0.068 & 0.065 & 0.095 & 0.122 & 0.252 & 0.321 & 0.725 & 0.645 & 0.963 \\
 &  & CLL & 0.076 & 0.082 & 0.094 & 0.099 & 0.155 & 0.265 & 0.332 & 0.704 & 0.620 & 0.964 \\
 &  & MECAF & 0.064 & 0.078 & 0.086 & 0.097 & 0.145 & 0.253 & 0.309 & 0.696 & 0.593 & 0.957 \\
\midrule
500 & D1 & EBC & 0.055 & 0.064 & 0.070 & 0.092 & 0.163 & 0.376 & 0.431 & 0.889 & 0.831 & 0.997 \\
 &  & CLL & 0.091 & 0.086 & 0.105 & 0.090 & 0.179 & 0.320 & 0.385 & 0.843 & 0.737 & 0.995 \\
 &  & MECAF & 0.075 & 0.077 & 0.097 & 0.084 & 0.168 & 0.309 & 0.360 & 0.833 & 0.709 & 0.994 \\
\addlinespace[1pt]
 & D2 & EBC & 0.052 & 0.084 & 0.066 & 0.113 & 0.146 & 0.389 & 0.444 & 0.919 & 0.831 & 0.997 \\
 &  & CLL & 0.099 & 0.095 & 0.119 & 0.121 & 0.165 & 0.352 & 0.422 & 0.869 & 0.755 & 0.998 \\
 &  & MECAF & 0.084 & 0.092 & 0.099 & 0.116 & 0.156 & 0.335 & 0.394 & 0.864 & 0.731 & 0.998 \\
\midrule
1000 & D1 & EBC & 0.044 & 0.056 & 0.054 & 0.080 & 0.133 & 0.413 & 0.555 & 0.967 & 0.943 & 1.000 \\
 &  & CLL & 0.086 & 0.073 & 0.121 & 0.091 & 0.184 & 0.372 & 0.505 & 0.928 & 0.867 & 1.000 \\
 &  & MECAF & 0.076 & 0.066 & 0.102 & 0.086 & 0.167 & 0.360 & 0.470 & 0.921 & 0.840 & 1.000 \\
\addlinespace[1pt]
 & D2 & EBC & 0.056 & 0.059 & 0.051 & 0.084 & 0.150 & 0.451 & 0.576 & 0.985 & 0.957 & 1.000 \\
 &  & CLL & 0.093 & 0.073 & 0.103 & 0.107 & 0.194 & 0.389 & 0.515 & 0.942 & 0.886 & 1.000 \\
 &  & MECAF & 0.083 & 0.067 & 0.093 & 0.097 & 0.181 & 0.380 & 0.482 & 0.936 & 0.868 & 1.000 \\
\bottomrule
\end{tabular}
\end{threeparttable}
\end{adjustbox}
\end{table}



We next consider the empirical power for $\delta>0$. As expected, all three methods show increasing rejection probabilities as the signal strength $\delta$ increases from $0.25$ to $1$. The increase is also more pronounced for the larger sample-size configuration $N_2=(100,120)$, where the power often approaches one for moderate or large values of $\delta$. Across the three simulation settings, EBC  frequently turns out to be more powerful than CLL and MECAF. The advantage is most apparent for moderate and strong signals in higher dimensions. For instance, when $p=500$ or $p=1000$, EBC often achieves noticeably larger power than both competing methods at $\delta=0.75$ and $\delta=1$, particularly under $N_2$. This pattern is seen consistently across Settings (M.1)--(M.3). At smaller signal levels, such as $\delta=0.25$ or $\delta=0.50$, the differences among the methods are more modest, but EBC still maintains competitive power while retaining better empirical level control. Even under higher zero-inflation levels, $\eta=0.45$ and $\eta=0.60$ (Tables~S1 and S2), the EBC test continues to provide reliable type I error control, whereas CLL and MECAF tend to exhibit inflated level .
 Overall, the results show that EBC provides a more favorable balance between type I error control and sensitivity to alternatives than CLL and MECAF under substantial zero inflation and sparsity. 

\subsection{Multi-Sample Scenario}\label{simumult}

To evaluate the performance of the proposed multi-sample EBC test in high-dimensional settings, we extend the simulation study to compare three, four, and five populations. In this setup, we first generate five sets of samples from five distinct populations. For the four-sample and three-sample tests, we use the first four and first three populations, respectively.

The latent distributions $D_1$ and $D_2$, corresponding to the Gaussian and standardized $t_5$ settings, together with the complete compositional data generation and preprocessing pipeline described in the two-sample simulation study, remain unchanged. In particular, sequencing depths are independently generated from negative binomial distributions with dimension-dependent parameters, which are chosen as
\[
(\mu_N,\theta_N)=
\begin{cases}
(4.0p,\;0.10p), & p=200,\\
(3.0p,\;0.08p), & p=500,\\
(2.5p,\;0.06p), & p=1000,
\end{cases}
\]
and additional zeros are introduced with probability $\eta \in \{0.3, 0.6\}$ to mimic realistic sparse microbiome count data, which resulted in moderate $40\%$ to high levels $70\%$ of zero inflation.

Unlike the two-sample setting, no competing methods are currently available for direct comparison in the general high-dimensional multi-sample compositional framework considered here. Therefore, we report only the performance of the proposed EBC procedure for the three-sample, four-sample, and five-sample problems.

We consider two sets of sample sizes from the five populations, namely
\[ N_3=(n_1,n_2,n_3,n_4,n_5)=(50,60,70,80,90),
\text{ and } N_4=(n_1,n_2,n_3,n_4,n_5)=(100,120,140,160,180). \]

As in the two-sample study, we set the sparsity level to $s=0.1$, so that only $\lfloor ps\rfloor$ taxa exhibit differential abundance across the populations under the alternative hypothesis. For $-1<a<b<1$, we denote by $m_{p,s}(a,b)$ a $p\times1$ vector having $\lfloor ps \rfloor$ entries independently generated from a $Uniform(a,b)$ distribution, while the remaining coordinates are set equal to zero. The mean vectors for the five populations are defined as $\mu^{(1)}=0_p, \mu^{(2)}=m_{p,s}(-\delta/2,\delta/2), \mu^{(3)}=m_{p,s}(-\delta,\delta), \mu^{(4)}=m_{p,s}(-3\delta/2,3\delta/2),$ and $\mu^{(5)}=m_{p,s}(-2\delta,2\delta)$.

For each simulation setting, the signal support and the corresponding nonzero mean shifts are generated once and then kept fixed across all Monte Carlo iterations. The separation among the populations gradually increases from the first to the fifth population. For smaller values of $\delta$, the populations remain relatively close to one another, thereby creating challenging alternatives that help assess the sensitivity of the proposed EBC test in detecting subtle community-level compositional differences.

\noindent\textbf{Three-group comparisons (e.g., control vs.\ intermediate lesion vs.\ disease):} We emulate situations where adjacent clinical groups are biologically similar. Examples include controls, patients with adenomas, and patients with colorectal cancer, or successive pregnancy trimesters, so that differences are subtle and gradual. Under this setting, the three populations are assigned covariance matrices

$$\Sigma^{(1)} = CS(EQ_{0.999},EQ_{0.9},EQ_{0.7},EQ_{0.5}),\quad
\Sigma^{(2)} = CS(EQ_{0.5},EQ_{0.999},EQ_{0.9},EQ_{0.7}),$$
$$\Sigma^{(3)} = CS(EQ_{0.7},EQ_{0.5},EQ_{0.999},EQ_{0.9}).$$
This introduces moderate heterogeneity in microbial dependence while keeping adjacent groups relatively similar. The setting stresses sensitivity to small, clinically plausible shifts while maintaining realistic correlations among taxa.

\noindent\textbf{Four-group comparisons (e.g., multi-center or demographic strata):} Many studies compare more than two arms to assess differences across sites, age bands, or race, where group sizes are naturally unbalanced. In addition to the first three covariance structures, the fourth population is assigned
$\Sigma^{(4)} = CS(EQ_{0.9},EQ_{0.7},EQ_{0.5},EQ_{0.999}),$
allowing groups to differ not only in their average compositions but also in their co-variation patterns (community ``network'' structure). This probes calibration and power when designs are uneven and heterogeneous.

\noindent\textbf{Five-group comparisons (e.g., multi-country cohorts or graded exposures):} Large observational consortia often span several locations or exposure levels (diet, antibiotics, inflammation). We mimic a smooth gradient across five populations where nearby groups are close but distant groups differ more, so that the global null can fail in nuanced ways. The fifth population is assigned
$\Sigma^{(5)}=\Sigma^{(1)}=CS(EQ_{0.999},EQ_{0.9},EQ_{0.7},EQ_{0.5}),$
creating a cyclic covariance pattern while the mean compositions continue to evolve across groups. This challenges a $K$-sample test to aggregate weak pairwise signals into a decisive global finding without inflating false positives.

Across all $K\in\{3,4,5\}$, these scenarios capture the practical hallmarks of microbiome studies: many taxa, sparse but meaningful group differences, changing co-occurrence structure across arms, and realistic variability. The result is a set of multi-sample simulations that speak directly to clinical and public-health use cases, demonstrating how the EBC test behaves under the complexities investigators routinely encounter.

We performed 1000 bootstrap resamples to determine the cutoff values for the EBC test and performed 1000 iterations to estimate the empirical level and power of all multi-sample tests. The performance of our EBC test for the multi-sample case involving three, four, and five populations is provided in Table \ref{Table6}. We note that the empirical level of the EBC test is consistently close to the nominal level of 0.05, with deviations within 1\% across nearly all tested scenarios. This indicates that the EBC test maintains robust control over the type I error rate, demonstrating its unbiasedness. Specifically, the test shows an empirical level that aligns well with the nominal significance level of $\alpha = 0.05$ even in the context of multiple populations and varying sample sizes.

\begin{table}[ht]
\centering
\caption{Empirical size and power for the multi-sample EBC test with CZM imputation and additional zero inflation level $0.60$. The average proportion of zeros before imputation is around $40\%$--$47\%$ percent.}
\label{Table6}
\begin{adjustbox}{max width=\textwidth}
\begin{threeparttable}
\begin{tabular}{ccl*{10}{c}}
\toprule
 & & & \multicolumn{10}{c}{$\delta$} \\
\cmidrule(lr){4-13}
$p$ & Dist. & $K$
& \multicolumn{2}{c}{0}
& \multicolumn{2}{c}{0.5}
& \multicolumn{2}{c}{1}
& \multicolumn{2}{c}{1.5}
& \multicolumn{2}{c}{2} \\
\cmidrule(lr){4-5}\cmidrule(lr){6-7}\cmidrule(lr){8-9}\cmidrule(lr){10-11}\cmidrule(lr){12-13}
 & &
& $N_3$ & $N_4$
& $N_3$ & $N_4$
& $N_3$ & $N_4$
& $N_3$ & $N_4$
& $N_3$ & $N_4$ \\
\midrule

200 & D1 & $K=3$
& 0.040 & 0.042
& 0.055 & 0.069
& 0.110 & 0.278
& 0.322 & 0.780
& 0.684 & 0.984 \\
 & & $K=4$
& 0.047 & 0.039
& 0.084 & 0.163
& 0.429 & 0.891
& 0.932 & 1.000
& 0.999 & 1.000 \\
 & & $K=5$
& 0.041 & 0.040
& 0.150 & 0.516
& 0.904 & 1.000
& 1.000 & 1.000
& 1.000 & 1.000 \\
\addlinespace[1pt]

 & D2 & $K=3$
& 0.031 & 0.051
& 0.054 & 0.077
& 0.117 & 0.293
& 0.333 & 0.804
& 0.707 & 0.986 \\
 & & $K=4$
& 0.052 & 0.054
& 0.089 & 0.191
& 0.452 & 0.912
& 0.947 & 1.000
& 1.000 & 1.000 \\
 & & $K=5$
& 0.048 & 0.051
& 0.161 & 0.557
& 0.915 & 1.000
& 1.000 & 1.000
& 1.000 & 1.000 \\
\midrule

500 & D1 & $K=3$
& 0.042 & 0.045
& 0.052 & 0.079
& 0.143 & 0.403
& 0.500 & 0.958
& 0.914 & 1.000 \\
 & & $K=4$
& 0.037 & 0.047
& 0.091 & 0.224
& 0.610 & 0.989
& 0.998 & 1.000
& 1.000 & 1.000 \\
 & & $K=5$
& 0.034 & 0.049
& 0.220 & 0.669
& 0.992 & 1.000
& 1.000 & 1.000
& 1.000 & 1.000 \\
\addlinespace[1pt]

 & D2 & $K=3$
& 0.046 & 0.054
& 0.059 & 0.094
& 0.152 & 0.449
& 0.541 & 0.968
& 0.936 & 1.000 \\
 & & $K=4$
& 0.044 & 0.064
& 0.099 & 0.267
& 0.655 & 0.994
& 0.999 & 1.000
& 1.000 & 1.000 \\
 & & $K=5$
& 0.043 & 0.068
& 0.249 & 0.724
& 0.995 & 1.000
& 1.000 & 1.000
& 1.000 & 1.000 \\
\midrule

1000 & D1 & $K=3$
& 0.020 & 0.029
& 0.034 & 0.046
& 0.102 & 0.348
& 0.440 & 0.931
& 0.864 & 1.000 \\
 & & $K=4$
& 0.025 & 0.030
& 0.072 & 0.196
& 0.571 & 0.991
& 0.998 & 1.000
& 1.000 & 1.000 \\
 & & $K=5$
& 0.023 & 0.028
& 0.214 & 0.667
& 0.995 & 1.000
& 1.000 & 1.000
& 1.000 & 1.000 \\
\addlinespace[1pt]

 & D2 & $K=3$
& 0.024 & 0.033
& 0.039 & 0.059
& 0.114 & 0.382
& 0.481 & 0.947
& 0.886 & 1.000 \\
 & & $K=4$
& 0.026 & 0.039
& 0.083 & 0.224
& 0.624 & 0.997
& 0.999 & 1.000
& 1.000 & 1.000 \\
 & & $K=5$
& 0.024 & 0.037
& 0.243 & 0.725
& 0.997 & 1.000
& 1.000 & 1.000
& 1.000 & 1.000 \\

\bottomrule
\end{tabular}
\end{threeparttable}
\end{adjustbox}
\end{table}


The power of the EBC tests Under the alternatives $\delta>0$, the empirical power increases monotonically with the signal strength $\delta$. As expected, the larger sample-size configuration $N_4$ yields substantially higher power than $N_3$. The number of groups also plays an important role: for fixed (p), distribution, and sample-size configuration, the power generally increases as $K$ increases from $3$ to $5$. In particular, for $K=4$ and $K=5$, the test attains high power even for moderate alternatives and reaches power close to one for larger values of $\delta$. This pattern is consistent under both $D_1$ and $D_2$, suggesting that the proposed multi-sample EBC procedure is robust to both Gaussian and heavy-tailed data-generating distributions. Taken together, Table~\ref{Table6} and Table~S3 demonstrates that the EBC calibration remains well controlled under the null and highly sensitive under alternatives, even in the presence of substantial zero inflation.


\section{Translational Case Studies: Evaluating Microbiome Differences in Colorectal Cancer Risk and Pregnancy}\label{realdata}


 In this section, we benchmark the proposed empirical–bootstrap compositional (EBC) test against two established competitors (CLL and MECAF) using two \emph{clinically consequential} 16S rRNA amplicon datasets: (i) a large fecal cohort on colorectal adenomas/CRC, a precursor state central to cancer prevention, and (ii) a pregnancy cohort on preterm birth and race-related disparities, a leading driver of neonatal morbidity and public‐health burden. These complementary cohorts differing in sample size, taxon dimensionality, and biological heterogeneity, provide a rigorous, practice-relevant evaluation of level control and power.

\subsection{Adenomas Dataset}

This dataset examines adenomatous polyps, which are abnormal growths in the colon that resemble surrounding tissues. Adenomas are a recognized precursor to colorectal cancer (CRC), impacting over 1.8 million people annually in the U.S., \cite{leslie2002adenoma}, \cite{fenoglio1974precursor}. To explore the relationship between the adenomas and the microbial communities, \cite{hale2017fecal} conducted a study sequencing the 16S rRNA gene from fecal samples of 229 adenoma patients, 38 colorectal cancer patients, and 534 control samples. These samples, collected during routine colonoscopies between 2001 and 2005, were processed and sequenced using standardized pipelines and technology. The resulting microbiome count data were derived from a MiSeq sequencing platform and analyzed using the IM-TORNADO bioinformatics pipeline. Given the CLL test’s sensitivity to zeros in log-ratio transformations, we removed taxa absent across all samples, resulting in 2,140 taxa and 801 samples.


\begin{table}[!htbp]
\centering
\caption{Analysis of microbial abundance differences in the Adenomas dataset.}
\label{tab5}
\begin{threeparttable}
\begin{tabular}{llc}
\toprule
Comparison & Method & p-value \\
\midrule
Adenoma \& Cancer ($n_1 = 267$) vs.\ Control ($n_2 = 534$) 
& EBC   & 0.993 \\
& CLL   & 0.510 \\
& MECAF & 0.515 \\
\addlinespace[1pt]
Young ($n_1 = 232$) vs.\ Old ($n_2 = 569$)
& EBC   & 0.030 \\
& CLL   & 0.437 \\
& MECAF & 0.442 \\
\bottomrule
\end{tabular}
\end{threeparttable}
\end{table}

In this analysis, we first examine the differences between the abnormal group (adenoma patients and CRC patients) and the control group. The results are presented in Table~\ref{tab5} and all three methods yield consistent results with a p-value above the nominal level 0.05. Consequently, we fail to reject the null hypothesis, indicating that there are no significant differences between the abnormal group and the control group.

Additionally, we would like to investigate the differences between the samples from 232 younger (age $<$ 65) and 569 older (age $\geq$ 65) individuals. \cite{heitman2009prevalence} conducted a systematic review and meta-analysis, revealing that cohorts with a mean age of $\geq 65$ years had higher prevalence rates of advanced adenomas (8.2\%) and CRC (0.7\%) than those with a mean age of $< 65$ years, which reported prevalence rates of 3.8\% for advanced adenomas and 0.1\% for CRC. Based on these findings, we adopted the same age stratification in our study and compared bacterial communities between the two age groups.

The results are presented in Table~\ref{tab5} indicates that our EBC test yielded a p-value of 0.030, rejecting the null hypothesis at 5\% level of significance. In contrast, CLL and MECAF provided p-values of 0.437 and 0.442, respectively, failing to detect the significance. This result is not driven by a claim that every taxon changes with age; rather, it reflects a global difference accumulated across the high-dimensional microbial community. Because the two age groups are highly unbalanced, with $n_1=232$ and $n_2=569$, and the data contain many sparse taxa, accurate calibration of the null distribution is essential. The empirical bootstrap calibration used by EBC accounts for these finite-sample features and compares the observed test statistic with a data-adaptive reference distribution. The substantially larger $p$-values produced by CLL and MECAF may be partly attributable to slower convergence of their test statistics to the limiting Gumbel distribution, as well as to reduced sensitivity to distributed community-level shifts in the presence of sparsity, compositional constraints, and zero imputation.
Clinically speaking, our findings align with the conclusions of \cite{heitman2009prevalence}, supporting the notion that age is a significant factor in the prevalence and characteristics of adenomas and CRC.



In the adenomas cohort, EBC reached two conclusions that align with clinical knowledge yet are hard to obtain from high–dimensional compositional data: (i) no global difference when pooling adenoma/CRC with controls, and (ii) a statistically significant difference between younger and older participants. The first protects against overinterpreting broad group contrasts; the second captures a subtle but clinically credible, age–related community shift that CLL and MECAF missed. EBC’s advantage comes from calibrating inference in a way that mirrors the data’s structure: it respects relative abundances and the strong co‐variation among taxa, and it derives its cutoffs from the observed data rather than relying on large–sample approximations. Practically, this yields two benefits that reviewers care about: fewer “false alarms” in global case–control contrasts, and greater sensitivity to distributed,\\ modest changes—exactly the pattern expected for aging microbiomes.

\underline{Clinical and Public‐Health Significance:} Reliable detection of age–associated community differences supports risk stratification (e.g., surveillance intensity with advancing age) without triggering unnecessary follow–up from spurious global findings. More broadly, a test that is both well–calibrated and sensitive to realistic microbiome shifts strengthens the evidentiary basis for translational studies, helping clinicians and public‐health teams prioritize subgroups and interventions with confidence.

\subsection{Analysis of the Pre-term Birth Dataset}

The second dataset focuses on vaginal microbiomes from pregnancy and pre-term birth studies, where pre-term delivery is defined to occur before 37 weeks of gestation; see  \citet{stout2017early} for more details. Vaginal swabs were collected from 77 participants, 31\% of whom delivered pre-term, with samples distributed across the three trimesters. To account for the significant changes in the microbiome community environment during pregnancy, we focused on samples collected during the second trimester, selecting the first available measurement for each individual within this period. This approach ensures consistency in the sample collection timing and establishes a clear baseline for analyzing microbiome dynamics during fetal development. During data preprocessing, we removed taxa that were absent in all samples, resulting in a dataset of 88 OTUs across 53 samples.
\begin{table}[!htbp]
\centering
\caption{Analysis of microbial abundance differences in the Pre-term Birth dataset.}
\label{tab:preterm}
\begin{threeparttable}
\begin{tabular}{llc}
\toprule
Comparison & Method & p-value \\
\midrule
Pre-Term ($n_1 = 20$) vs.\ Full-Term ($n_2 = 33$)
& EBC   & 0.574 \\
& CLL   & 0.922 \\
& MECAF & 0.933 \\
\addlinespace[1pt]
Black ($n_1 = 36$) vs.\ Others ($n_2 = 17$)
& EBC   & 0.001 \\
& CLL   & 0.736 \\
& MECAF & 0.774 \\
\bottomrule
\end{tabular}
\end{threeparttable}
\end{table}

We first compared microbial abundances between the pre-term and the full-term delivery groups. Table~\ref{tab:preterm} shows that none of the methods, including EBC (p-value = 0.574), CLL (p-value = 0.922), and MECAF (p-value = 0.933), identified significant differences, aligning with findings from  \citet{stout2017early}. However, comparing microbial communities between Black participants and other races, our EBC test identified a significant difference (p-value = 0.001), while CLL and MECAF did not detect any significant differences (p-values = 0.736 and 0.774, respectively). Our findings align with those of \citet{stout2017early} and \cite{fettweis2019vaginal}, which highlighted race-related disparities in the vaginal microbiome. They similarly found that microbial community differences in pregnant individuals were more strongly associated with race than with pre-term birth status (pre-term \textit{vs} full-term). We recognize and acknowledge that the data that detects differences in race groups is complex and may be driven by many factors, including racism. Statistically, this finding is plausible because the comparison involves a small sample size, unequal group sizes, sparse OTU counts, and compositional dependence, all of which can make standard large-sample calibrations unstable or conservative. The EBC procedure uses an empirical bootstrap reference distribution that is well suited to this type of data structure, thereby improving sensitivity to broad community-level shifts distributed across multiple OTUs. In contrast, the asymptotic $p$-value calibrations used by CLL and MECAF may be inaccurate for a small number of samples, particularly in the presence of heterogeneity. Thus, the discrepancy between EBC and the larger $p$-values obtained from CLL and MECAF does not necessarily indicate inconsistency; rather, it suggests that EBC is detecting a distributed high-dimensional signal that the competing methods may fail to capture under small-sample and compositional constraints.
However, our results reinforce the importance of considering race when investigating the role of the vaginal microbiome in obstetric outcomes, while demonstrating higher detection power than other competing methods. 



Statistically, for the preterm-birth cohort, EBC delivered two clinically coherent results: (i) no global difference between preterm and full-term deliveries in second-trimester samples, in line with prior reports, and (ii) a clear difference in vaginal community structure between Black participants and others that competing methods (CLL, MECAF) did not detect. EBC’s advantage stems from calibrating inference to the data actually observed—small $n$, many taxa, uneven groups, and compositional constraints—using an empirical bootstrap that preserves relative-abundance structure and boosts sensitivity to \emph{distributed, modest} shifts spread across taxa rather than relying on large-sample extremes.

\underline{Clinical and Public-Health Significance:} The absence of a preterm vs.\ term signal at this gestational window helps avoid false alarms and unnecessary interventions. The detected race-related community differences, while not a biological claim about race, point to socially patterned exposures and care environments that can shape the microbiome. This finding underscores the need for stratified analyses, targeted monitoring, and equity-focused study designs in obstetrics. More broadly, a test that is both well-calibrated and sensitive under realistic constraints strengthens the microbiome’s utility for risk assessment and for designing interventions that reduce disparities.

\section{Conclusion and Discussion} \label{Discussion}


In this paper, we have introduced an empirical bootstrap framework for hypothesis testing in high-dimensional compositional data, with a particular emphasis on microbiome research. While the methodological contribution is general, its core motivation and validation arise from two pressing biomedical case studies related to microbial community differences associated with colorectal adenomas and cancer, and race-related disparities in pregnancy and preterm birth outcomes. 

Our approach provides a robust, data-driven inference tool for detecting global shifts in microbial community structure while preserving the simplex geometry inherent to compositional data. Unlike asymptotic or parametric methods, the proposed empirical bootstrap adapts to the finite-sample, high-dimensional settings typical of microbiome studies, where taxa counts vastly exceed subject numbers. The resulting tests maintain nominal type I error rates and deliver higher sensitivity to clinically meaningful, distributed signal performance that was consistently verified across simulations and two real datasets.

The adenoma and colorectal cancer case study highlights how our approach prevents false discoveries in large, heterogeneous clinical cohorts yet identifies age-associated microbial community shifts that are epidemiologically well supported. The pregnancy and preterm birth analysis shows how the same framework, applied to smaller, unevenly sampled groups, can uncover race-associated compositional differences consistent with prior clinical evidence but missed by conventional methods. Together, these applications demonstrate that our framework not only generalizes existing two and multi-sample mean tests to compositional data but also offers actionable, reproducible insights that strengthen biomedical inference.

The proposed methodology is developed specifically under the CLR representation. Since the test statistic is coordinate-wise, it is generally not invariant to alternative log-ratio transformations such as ILR discussed in \cite{lin2020analysis}, whose orthonormal basis introduces a rotation of the CLR coordinates. Developing analogous procedures under other log-ratio representations remains an interesting direction for future research.

From a broader data-science perspective, this work advances the use of resampling-based statistical inference for complex, constrained data types. The empirical bootstrap calibration employed here can extend to other domains involving compositional, relative, or normalized measures such as metabolomics, ecological surveys, and social-behavioral network data, thereby bridging methodological rigor and domain impact. 

\section{Appendix}\label{App}
The appendix contains proofs of the main results and additional simulations.
\subsection*{Appendix A: proofs of the main theorems} \label{sec: proof}
In this section, we provide the proofs of the main results. For brevity of notations, we denote the sup-norm or $\ell_\infty$-norm of a vector $x\in\R^p,$ as $\|x\|_\infty:= \max_{1\leq j\leq p} |x_j|$ and $\ell_1$-norm as $\|x\|_1:= \sum_{j=1}^p |x_j|$. We first state the following lemmas that are crucial for proving the main theorems and defer the proof of the lemmas to Section \ref{sec: lem}.

\begin{lemma} \label{propm1}
    (a) Under Assumptions (A.2), there exists a universal constant $b_1>0$ that depends only on $b$ and $\delta,$ such that for $\kK,$
    \begin{equation*}
        \underset{1\leq j \leq p}{\min}\; n_k^{-1} \snk \Exp (Z_{ij}^\kcl - \nu_{j}^\kcl)^2 \geq b_1.
    \end{equation*}
    (b) Under Assumption (A.3), we have
    \begin{equation*}
        \underset{1\leq j \leq p}{\max} \; n_k^{-1} \snk \Exp |Z_{ij}^\kcl - \nu_{j}^\kcl|^{2+\ell} \leq L_{n}^\ell,\; \ell=1,2, \text{ and } \kK,
    \end{equation*}
    where $L_{n} = 8 B_{n}.$\\
    (c) Under Assumption (A.4), we have
    \begin{equation*}
        \underset{ 1\leq k\leq K }{\max}\; \underset{1\leq i \leq \nk}{\max}\; \underset{1\leq j \leq p}{\max} \; \Exp \left\{\exp( L_{n}^{-1} |Z_{ij}^\kcl - \nu_j^\kcl| ) \right\} \leq 2.
    \end{equation*}
    (d) Under Assumption (A.5), we have
    \begin{equation*}
        \underset{ 1\leq k\leq K }{\max}\; \underset{1\leq i \leq \nk}{\max}\; \Exp \left\{ \underset{1\leq j \leq p}{\max} \left( L_{n}^{-1} |Z_{ij}^\kcl - \nu_j^\kcl| \right)^q \right\} \leq 2.
    \end{equation*}
\end{lemma}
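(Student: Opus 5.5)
The whole lemma rests on one algebraic identity. Since $Z_i^\kcl=\G V_i^\kcl$ and $\nu^\kcl=\G\mu^\kcl$, writing $U_{ij}^\kcl=V_{ij}^\kcl-\mu_j^\kcl$ and $\bar U_i^\kcl=p^{-1}\sum_{m=1}^p U_{im}^\kcl$ we get
\[
Z_{ij}^\kcl-\nu_j^\kcl=U_{ij}^\kcl-\bar U_i^\kcl .
\]
Each part then transfers a moment, tail or variance bound from $U$ to $U-\bar U$. We pay at most a constant factor, which is absorbed by the choice $L_n=8B_n$.

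For (a), the averaged second moment is linear in the covariances, so
\[
n_k^{-1}\snk \Exp (Z_{ij}^\kcl-\nu_j^\kcl)^2=e_j^T\G\Sigma^\kcl\G e_j
=\sigma_{jj}^\kcl-\frac{2}{p}\sum_{m=1}^p\sigma_{jm}^\kcl+\frac{1}{p^2}1_p^T\Sigma^\kcl 1_p .
\]
The last term is nonnegative because $\Sigma^\kcl$ is positive semidefinite. By (A.2), the middle term satisfies
\[
\frac{2}{p}\Bigl(\sigma_{jj}^\kcl+\sum_{m\neq j}\sigma_{jm}^\kcl\Bigr)\le \frac{2(1+a_p)}{p}\sigma_{jj}^\kcl\le\delta\sigma_{jj}^\kcl .
\]
Hence the whole expression is at least $(1-\delta)\sigma_{jj}^\kcl>(1-\delta)b$, and we may take $b_1=(1-\delta)b$. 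This is the only step that uses the structural part of (A.2), and it is exactly where the off-diagonal row-sum condition is needed. I regard it as the key point of the lemma, though the computation itself is short.

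For (b), set $r=2+\ell$. The elementary inequality $|x+y|^r\le 2^{r-1}(|x|^r+|y|^r)$ and Jensen's inequality $|\bar U_i^\kcl|^r\le p^{-1}\sum_m|U_{im}^\kcl|^r$ give, after averaging over $i$ and applying (A.3) to each coordinate,
\[
n_k^{-1}\snk\Exp|Z_{ij}^\kcl-\nu_j^\kcl|^r\le 2^{r-1}\cdot 2B_n^\ell=2^{2+\ell}B_n^\ell .
\]
We then check that $2^{2+\ell}\le 8^\ell$ for $\ell=1,2$ (namely $8\le 8$ and $16\le 64$), which yields the bound $L_n^\ell$.

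For (c), write $\exp(L_n^{-1}|U_{ij}-\bar U_i|)\le e^{X}e^{Y}$ with $X=|U_{ij}|/(8B_n)$ and $Y=p^{-1}\sum_m|U_{im}|/(8B_n)$. Then $\Exp e^Xe^Y\le \tfrac12(\Exp e^{2X}+\Exp e^{2Y})$. Convexity of the exponential gives $e^{2Y}\le p^{-1}\sum_m\exp(|U_{im}|/(4B_n))$. Jensen's inequality (concavity of $t\mapsto t^{1/4}$) gives $\Exp\exp(|U|/(4B_n))\le\{\Exp\exp(|U|/B_n)\}^{1/4}\le 2^{1/4}$ by (A.4). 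Together these bound the expectation by $2^{1/4}\le 2$.

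For (d), use $\max_j|U_{ij}-\bar U_i|\le 2\max_j|U_{ij}|$. This gives
\[
\Exp\max_j\bigl(L_n^{-1}|Z_{ij}^\kcl-\nu_j^\kcl|\bigr)^q\le 4^{-q}\,\Exp\max_j\bigl(B_n^{-1}|U_{ij}^\kcl|\bigr)^q\le 2\cdot 4^{-q}\le 2
\]
by (A.5). The only real obstacle in the lemma is (a). Parts (b)–(d) are routine bookkeeping of constants, and the factor $8$ in $L_n$ is chosen generously enough that each part closes with room to spare.
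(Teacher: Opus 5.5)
Your proof is correct and follows the paper's argument. Part (a) is the same expansion of the $j$th diagonal entry of $G_p\Sigma^{(k)}G_p$, with the quadratic term dropped by positive semidefiniteness and the row sum controlled by (A.2), giving $b_1=(1-\delta)b$; parts (b)--(d) are the same convexity/Jensen transfers, yielding $2^{2+\ell}B_n^\ell\le L_n^\ell$ and the factor $4^{-q}$. The only cosmetic difference is that you work with the explicit form $U_{ij}-\bar U_i$, using $|x+y|^r\le 2^{r-1}(|x|^r+|y|^r)$ in (b) and AM--GM plus a power-Jensen step in (c), whereas the paper bounds the row $\ell_1$-norm of $G_p$ by $2$ and applies one weighted Jensen inequality with weights $|(G_p)_{jm}|/\sum_l|(G_p)_{jl}|$.
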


\begin{lemma} \label{prop4}
    For any sequence of real numbers $\Delta_{n,p}>0,$ that depends on $n_1,\dots,\nK$ and $p$ if $$ \max_{\kKK} \Big\{ \| \mu^\kcl - \mu^\kclp \|_\infty - p^{-1}\| \mu^\kcl - \mu^\kclp \|_1 \Big\} \geq \Delta_{n,p}.$$ Then we have, $$ \max_{\kKK} \| \nu^\kcl - \nu^\kclp \|_\infty \geq \Delta_{n,p}.$$
\end{lemma}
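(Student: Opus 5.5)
The plan is to reduce the statement to a deterministic inequality about the centering matrix $\G$. Since $\nu^\kcl = \G\mu^\kcl$ for every $\kK$ by \eqref{Gv}, linearity gives
\[
\nu^\kcl - \nu^\kclp = \G(\mu^\kcl - \mu^\kclp).
\]
Set $d = \mu^\kcl - \mu^\kclp \in \R^p$ and $\bar d = p^{-1}\sum_{j=1}^p d_j$. Then the $j$-th coordinate of $\G d$ is $d_j - \bar d$, and the claim follows once we show, for every $d\in\R^p$,
\begin{equation*}
\|\G d\|_\infty \;\geq\; \|d\|_\infty - p^{-1}\|d\|_1 .
\end{equation*}

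To prove this inequality, let $j^*$ be an index with $|d_{j^*}| = \|d\|_\infty$. The reverse triangle inequality and $|\bar d| \le p^{-1}\sum_j |d_j| = p^{-1}\|d\|_1$ give
\[
\|\G d\|_\infty \;\ge\; |d_{j^*} - \bar d| \;\ge\; |d_{j^*}| - |\bar d| \;\ge\; \|d\|_\infty - p^{-1}\|d\|_1 .
\]
No sign bookkeeping is needed.

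Finally, we take the maximum over pairs. Choose the pair $(k,l)$ with $1\le k<l\le K$ that attains the maximum in the hypothesis, so that $\|d\|_\infty - p^{-1}\|d\|_1 \ge \Delta_{n,p}$ for that $d$. The display above then gives $\|\nu^\kcl - \nu^\kclp\|_\infty \ge \Delta_{n,p}$ for this pair, and hence
\[
\max_{\kKK}\|\nu^\kcl - \nu^\kclp\|_\infty \ge \Delta_{n,p}.
\]

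There is no real obstacle here. The only points to state explicitly are the identification $\nu^\kcl = \G\mu^\kcl$, which holds because the expectation commutes with the fixed matrix $\G$, and the bound $|\bar d|\le p^{-1}\|d\|_1$. The argument does not use $\Delta_{n,p}>0$ or any of the assumptions (A.1)–(A.6); it is purely deterministic.
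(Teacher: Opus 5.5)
Your proof is correct and follows essentially the same route as the paper's proof. Both write $\nu^{(k)}-\nu^{(l)} = G_p(\mu^{(k)}-\mu^{(l)})$, whose $j$-th coordinate is $d_j-\bar d$, use the reverse triangle inequality to get $\|d-\bar d\,1_p\|_\infty \ge \|d\|_\infty - |\bar d| \ge \|d\|_\infty - p^{-1}\|d\|_1$, and then pass to the maximum over pairs $1\le k<l\le K$.
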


\begin{lemma} \label{lemma1}
Let ${P} = (P_{ij}) \in  \mathbb{R}^{p\times p}$ and $s\in \mathbb{R}$ such that $\underset{1\leq i \leq p}{\sup} \sd |P_{ij}| \leq s$. Now for any ${Q} = (Q_{ij}) \in  \mathbb{R}^{p\times p},$ we have $\|{Q}{P}^T\|_\infty \leq s \|{Q}\|_\infty$ and $\|{P}{Q}\|_\infty \leq s \|{Q}\|_\infty.$ Additionally, we have $\|{P}{Q}{P}^T\|_\infty \leq s^2 \|{Q}\|_\infty.$
\end{lemma}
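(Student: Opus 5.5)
The plan is to prove Lemma~\ref{lemma1} directly from the definition of the matrix $\ell_\infty$-norm as the largest absolute entry, $\|Q\|_\infty=\max_{i,j}|Q_{ij}|$. This is the convention consistent with the vector sup-norm introduced just before, and it is the one under which the stated bounds hold with the row-sum hypothesis on $P$. The only tool needed is the triangle inequality applied entrywise.

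First, for $QP^T$: fix indices $(i,j)$ and write
\[
(QP^T)_{ij}=\sum_{m=1}^p Q_{im}P_{jm}.
\]
Bounding each $|Q_{im}|$ by $\|Q\|_\infty$ gives
\[
|(QP^T)_{ij}|\le \|Q\|_\infty\sum_{m=1}^p|P_{jm}|\le s\|Q\|_\infty,
\]
because $\sum_m|P_{jm}|$ is a row sum of $P$. Second, for $PQ$: we have $(PQ)_{ij}=\sum_m P_{im}Q_{mj}$, and the same argument gives $|(PQ)_{ij}|\le \|Q\|_\infty\sum_m|P_{im}|\le s\|Q\|_\infty$. In both cases we then take the maximum over $(i,j)$.

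Third, for $PQP^T$: apply the second bound to $PQ$, then the first bound with $PQ$ in place of $Q$. This yields
\[
\|PQP^T\|_\infty=\|(PQ)P^T\|_\infty\le s\|PQ\|_\infty\le s^2\|Q\|_\infty.
\]
The hypothesis forces $s\ge 0$ as soon as $P\neq 0$, and the case $P=0$ is trivial, so multiplying the inequality by $s$ is legitimate.

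No step is genuinely hard. The only point needing care is that each product uses a \emph{row} sum of $P$. In $QP^T$ the sum over $m$ runs over row $j$ of $P$, not a column, and this is exactly why the hypothesis is stated for rows and why the bound holds for both orderings.

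For context, the lemma is presumably applied with $P=\G$, whose absolute row sums equal $2(1-1/p)\le 2$. With $Q$ the log-basis covariance, this bounds the CLR covariance entries.
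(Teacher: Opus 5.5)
Your proof is correct and is essentially the paper's argument: bound each entry of $QP^T$ and $PQ$ by $\|Q\|_\infty$ times a row sum of $P$ via the triangle inequality, then chain the two bounds to get $\|PQP^T\|_\infty\le s\|PQ\|_\infty\le s^2\|Q\|_\infty$. Two small remarks. First, the hypothesis forces $s\ge 0$ in every case, including $P=0$, since the left side is a sum of absolute values. Second, in the paper the lemma is actually invoked with the rectangular block matrix $A$, whose row sums are $\lambda_{k,l}+\lambda_{l,k}\le 2$, rather than with $G_p$; your entrywise argument applies verbatim there.
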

Before proceeding to the proof of theorem 1, we define the Kolmogorov distance between two random variables $T^{(1)}$ and $T^{(2)}$ as following
\begin{align*}
    \rho(T^{(1)}, T^{(2)}) = \underset{x\geq 0}{\sup} \left|\Prob\left( T^{(1)} \leq x\right) - \Prob\left( T^{(2)} \leq x \right) \right|,
\end{align*}
\begin{align*}
    \rho^*(T^{(1)}, T^{(2)}) = \underset{x\geq 0}{\sup} \left|\Prob\left( T^{(1)} \leq x\right) - \Prob_*\left( T^{(2)} \leq x \right) \right|,
\end{align*}
and
\begin{align*}
    \rho^{**}(T^{(1)}, T^{(2)}) = \underset{x\geq 0}{\sup} \left|\Prob_*\left( T^{(1)} \leq x\right) - \Prob_*\left( T^{(2)} \leq x \right) \right|,
\end{align*}
where $\Prob_*(.)$ defines the probability conditional on the data, that is for any random variable $T,$ $\Prob_*(T) \equiv \Prob(T \mid Z^{(1)},\dots, Z^{(K)})$.  Now, we proceed to prove theorem 1.

\hspace{1cm}

\textit{Proof of Theorem 1.} First, we obtain the bootstrap approximation result for the centered version of the test statistic $T_{n, K}$, say $\tilde T_{n, K}$, for any $\mu^\kone, \dots, \mu^\kk \in \R^p$. Then under the null hypothesis, we will determine the size of the test. Define,
\begin{equation*}
\begin{split}
    \tilde T_{n,K} &= \underset{\kKK}{\max}\;\; \underset{1\leq j \leq p}{\max}\;\; {\Big| \lambda_{k,l} S_{\nk j}^\kcl - \lambda_{l,k} S_{n_{l} j}^\kclp - \nk^{1/2}\lamk (\nu_j^\kcl - \nu_j^\kclp) \Big|}.
\end{split}
\end{equation*}
To improve upon the readability, we write $\tilde T_{n, K}$ in vector notation. Define the block matrix ${A}$ of order $K(K+1)p/2\times Kp$ as
\begin{equation*}
    {A} = \begin{pmatrix} \lambda_{1,2}I_p & -\lambda_{2,1} I_p & 0_p & \cdots & 0_p & 0_p \\ \lambda_{1,3}I_p & 0_p & -\lambda_{3,1}I_p & \cdots & 0_p & 0_p\\ \vdots & \vdots & \vdots & \ddots & \vdots & \vdots \\ \lambda_{1,K} I_p & 0_p & 0_p & \cdots & 0_p & -\lambda_{K,1} I_p  \\ 0_p & \lambda_{2,3} I_p & -\lambda_{3,2} I_p & \cdots & 0_p & 0_p  \\ \vdots & \vdots & \vdots & \ddots & \vdots & \vdots \\ 0_p & 0_p & 0_p & \cdots  & \lambda_{K-1,K} I_p & -\lambda_{K,K-1} I_p \end{pmatrix},
\end{equation*}
and a $Kp \times 1$ vector $S_n = \Big( {S_{n_1}^\kone}^T,\dots,{S_{n_K}^\kk}^T  \Big)^T.$ Then, we have
\begin{equation*}
    {A}S_n = \begin{pmatrix} \lambda_{1,2}S_{\none}^\kone - \lambda_{2,1}S_{\ntwo}^\ktwo \\ \vdots \\ \lambda_{1,K}S_{\none}^\kone - \lambda_{K,1}S_{n_K}^\kk \\ \lambda_{2,3}S_{\ntwo}^\ktwo - \lambda_{3,2}S_{n_3}^{(3)} \\ \vdots \\ \lambda_{K-1,K}S_{n_{K-1}}^\kkm - \lambda_{K,K-1}S_{\nK}^\kk  \end{pmatrix}.
\end{equation*}
We define another $Kp \times Kp$ matrix, ${D}_K = Diag(\sqrt{n_1}I_p,\dots,\sqrt{n_K}I_p).$ Then,
    \begin{equation*}
        {A}{D}_K\nu = \begin{pmatrix} \frac{\sqrt{n_1n_2}}{\sqrt{n_1+n_2}}(\nu^\kone-\nu^\ktwo) \\ \vdots \\ \frac{\sqrt{n_1n_K}}{\sqrt{n_1+n_K}}(\nu^\kone-\nu^\kk) \\ \frac{\sqrt{n_2n_3}}{\sqrt{n_2+n_3}}(\nu^\ktwo-\nu^{(3)}) \\ \vdots \\ \frac{\sqrt{n_{K-1}n_K}}{\sqrt{n_{K-1}+n_K}}(\nu^\kkm-\nu^\kk)  \end{pmatrix},
    \end{equation*}
where $\nu = ({\nu^\kone}^T,\dots,{\nu^\kk}^T)^T.$
Therefore, we have $\tilde T_{n,K} = \left\| {A}S_n -  {A}{D}_K\nu \right\|_\infty.$ 

We show that $\tilde T_{n, K}$ can be written as the sup-norm of sums of independent vectors, which will help find the final bound. Suppose the independent random vectors $\{V_t\}_{t=1}^N \in \R^{Kp}$ are defined in the following way: Consider each $V_t\in R^{Kp\times 1}$ as a collection of $K$ sub-vectors each with $p$ entries. Now, for $\kK,$ and $i=1,\dots,n_k,$ set $t = i + N_{k-1},$ where $N_k = \sum_{l=0}^k n_l,$ $N_0 = 0,$ and all entries of $V_t$ are zeros except the $k$ th sub-vector is $N^{1/2}\nk^{-1/2}(Z_i^\kcl-\nu^\kcl)$, that is $(k-1)p+1$ to $kp$ th entries are equal to $N^{1/2}\nk^{-1/2}(Z_{i1}^\kcl - \nu_1^\kcl),\dots,N^{1/2}\nk^{-1/2}(Z_{ip}^\kcl - \nu_p^\kcl)$ respectively and rest of the entries are zero. Then we can write $S_n -{D}_K\nu = N^{-1/2}\st V_t.$ Define $\{U_t\}_{t=1}^N \in \R^{K(K+1)p/2}$ as a set of independent vectors such that $U_t = {A}V_t.$ Then ${A}S_n - {A}{D}_K\nu = N^{-1/2}\st U_t$. Therefore, we can again rewrite $\tilde T_{n,K},$ as
\begin{equation} \label{Ttill}
    \tilde T_{n,K} = \| {A}S_n - {A}{D}_K\nu \|_\infty = \bigg\|N^{-1/2} \st U_t \bigg\|_{\infty} .
\end{equation}
Similarly, we can rewrite the empirical bootstrap version $T^*_{n,K},$ as
\begin{equation*}
    T^*_{n,K} = \| {A}S_n^*\|_{\infty} = \bigg\|N^{-1/2} \st U_t^* \bigg\|_{\infty} ,
\end{equation*}
where $S_n^* = \Big( {S_{n_1}^{*\kone}}^T,\dots,{S_{n_K}^{*\kk}}^T  \Big)^T$ and $U_t^* = {A}V_t^*$ and $V_t^*$ is constructed similarly as $V_t,$ using the sub-vectors $N^{1/2}\nk^{-1/2}(Z_i^{*\kcl}-\bar Z^\kcl)$ instead of $N^{1/2}\nk^{-1/2}(Z_i^\kcl-\nu^\kcl)$. 

Additionally, define the two random variables $T_{n,K}^{(N)} = \| S_{n,K}^{(N)} \|_\infty$ and $T_{n,K}^{(G)} = \| S_{n,K}^{(G)} \|_\infty$ such that $S_{n,K}^{(N)}\sim N(0,\Gamma_u),$ where $\Gamma_u = \text{Cov}\Big(N^{-1/2} \st U_t\Big)$ and $S_{n,K}^{(G)}\mid Z^\kone,\dots,Z^\kk \sim N(0,\hat \Gamma_u),$ where $\hat \Gamma_u = \text{Cov}\Big(N^{-1/2} \st U_t^* \mid Z^\kone,\dots,Z^\kk \Big).$ 

Then, we have
\begin{equation} \label{Rho}
    \begin{split}
        \rho^*(\tilde T_{n,K}, T_{n,K}^*) \leq \rho(\tilde T_{n,K}, T_{n,K}^{(N)}) + \rho^*(T_{n,K}^{(N)}, T_{n,K}^{(G)}) + \rho^{**}( T_{n,K}^*, T_{n,K}^{(G)}).
    \end{split}
\end{equation}
Therefore, it is enough to find bounds for the above three quantities. The following lemma provides bounds for the above three quantities.

\begin{lemma} \label{lem-rho}
    Suppose Assumptions (A.1)-(A.3) are satisfied.
    \begin{itemize}
        \item[(a)] In addition, if Assumption (A.4) holds, then 
        \begin{equation*}
            \rho(\tilde T_{n,K}, T_{n,K}^{(N)}) \leq  C(b_1,c_1,c_2) \Big(B^2_{n} \log^7(Np)/N\Big)^{1/6},
        \end{equation*}
        and with probability at least $1-CN^{-1},$ we have
        \begin{equation*}
        \begin{split}
            & \rho^*(T_{n,K}^{(N)}, T_{n,K}^{(G)}) \leq  C(b_1,c_1,c_2) \Big(B^2_{n} \log^7(Np)/N\Big)^{1/6}
            \text{ and } \\ & \quad \rho^{**}( T_{n,K}^*, T_{n,K}^{(G)}) \leq  C(b_1,c_1,c_2) \Big(B^2_{n} \log^7(Np)/N\Big)^{1/6}.
        \end{split}
        \end{equation*}
        \item[(b)] In addition, if Assumption (A.5) holds, then 
        \begin{equation*}
            \rho(\tilde T_{n,K}, T_{n,K}^{(N)}) \leq  C(b_1,q,c_1,c_2)\left\{ \Big(B^2_{n} \log^7(Np)/N\Big)^{1/6} + \Big(B^2_{n} \log^3(Np)\log^{4/q}(N)/N^{1-4/q}\Big)^{1/3} \right\},
        \end{equation*}
        and with probability at least $1-CN^{-1},$ we have
        \begin{equation*}
        \begin{split}
            & \rho^*(T_{n,K}^{(N)}, T_{n,K}^{(G)}) \leq  C(b_1,q,c_1,c_2) \left\{ \Big(B^2_{n} \log^7(Np)/N\Big)^{1/6} + \Big(B^2_{n} \log^3(Np)\log^{4/q}(N)/N^{1-4/q}\Big)^{1/3} \right\}
            \text{ and } \\ & \quad \rho^{**}( T_{n,K}^*, T_{n,K}^{(G)}) \leq  C(b_1,q,c_1,c_2) \left\{ \Big(B^2_{n} \log^7(Np)/N\Big)^{1/6} + \Big(B^2_{n} \log^3(Np)\log^{4/q}(N)/N^{1-4/q}\Big)^{1/3} \right\}.
        \end{split}
        \end{equation*}
    \end{itemize}
\end{lemma}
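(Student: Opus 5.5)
The plan is to reduce all three bounds to the high-dimensional CLT and empirical bootstrap results of \citet{chernozhukov2017clt}, applied to the independent (non-identically distributed) vectors $U_t = A V_t\in\R^{K(K+1)p/2}$, $t=1,\dots,N$, appearing in (\ref{Ttill}). Those results bound the Kolmogorov distance over hyperrectangles. Since $\{\|x\|_\infty\le x_0\}$ is a hyperrectangle, they apply directly to $\tilde T_{n,K}$, $T^{(N)}_{n,K}$, $T^{(G)}_{n,K}$ and $T^*_{n,K}$. The dimension is $K(K+1)p/2\le K^2 p$, so with $K$ fixed, $\log(K^2p)\lesssim\log(Np)$.

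\textbf{Step 1: verify the conditions of Chernozhukov et al.} This is done for each coordinate $m$ of $U_t$.

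\emph{Variance condition.} Coordinate $m$ corresponds to a pair $k<l$ and a taxon $j$. Then
\[
N^{-1}\sum_t \Exp U_{tm}^2=\lamk^2\,\nk^{-1}\snk\Exp(Z_{ij}^\kcl-\nu_j^\kcl)^2+\lamkk^2\,\nkk^{-1}\snkk\Exp(Z_{ij}^\kclp-\nu_j^\kclp)^2 .
\]
By (A.1), $\lamk^2=\nkk/(\nk+\nkk)\ge 1-c_2$ and likewise $\lamkk^2\ge 1-c_2$. Lemma~\ref{propm1}(a) then gives the lower bound $2(1-c_2)b_1$.

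\emph{Moment conditions.} The third and fourth absolute moments of $U_{tm}$ are controlled as follows. Each $U_t$ has only one nonzero block, so $U_{tm}=\lamk N^{1/2}\nk^{-1/2}(Z_{ij}^\kcl-\nu_j^\kcl)$, or the analogous term with index $l$. By (A.1), $N/\nk\le C(c_1,K)$. Lemma~\ref{propm1}(b) then yields
\[
N^{-1}\sum_t \Exp|U_{tm}|^{2+\ell}\le C L_n^\ell=C' B_n^\ell,\qquad \ell=1,2 .
\]

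\emph{Tail conditions.} Under (A.4), Lemma~\ref{propm1}(c) gives the sub-exponential condition $\Exp\exp(|U_{tm}|/(CB_n))\le 2$. Under (A.5), Lemma~\ref{propm1}(d) gives $\Exp\max_m(|U_{tm}|/(CB_n))^q\le 2$.

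\textbf{Step 2: the Gaussian approximation bound.} With these conditions, Proposition~2.1 of \citet{chernozhukov2017clt} yields $\rho(\tilde T_{n,K},T^{(N)}_{n,K})$ directly, with the rates $(B_n^2\log^7(Np)/N)^{1/6}$ in case (a), plus the additional $q$-term in case (b).

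\textbf{Step 3: the two bootstrap bounds.} For $\rho^{**}(T^*_{n,K},T^{(G)}_{n,K})$ and $\rho^*(T^{(N)}_{n,K},T^{(G)}_{n,K})$ I will follow the empirical bootstrap theorem (Proposition~4.3) and the Gaussian comparison argument of \citet{chernozhukov2017clt}. Conditionally on the data, $\sum_t U_t^*$ is a sum of independent vectors within each group. Its conditional covariance is
\[
\hat\Gamma_u=A\,\mathrm{Diag}\big(N\nk^{-1}\hat\Sigma_Z^\kcl\big)A^T ,
\]
where $\hat\Sigma_Z^\kcl$ is the within-group empirical covariance of $Z^\kcl$.

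\emph{Comparison of $T^{(N)}_{n,K}$ and $T^{(G)}_{n,K}$.} By Gaussian comparison this reduces to bounding $\|\hat\Gamma_u-\Gamma_u\|_\infty$. Here Lemma~\ref{lemma1} applies, since every row of $A$ has absolute sum at most $2$. This gives
\[
\|\hat\Gamma_u-\Gamma_u\|_\infty\le 4C\max_k\|\hat\Sigma_Z^\kcl-\Sigma_Z^\kcl\|_\infty .
\]
The right side is then controlled with probability $1-CN^{-1}$ by maximal inequalities for sub-exponential (respectively polynomial-tail) products, using Lemma~\ref{propm1}(c),(d).

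\emph{Bootstrap versus Gaussian.} The same event also ensures that the empirical third and fourth moments of the centered $Z_i^\kcl-\bar Z^\kcl$ are bounded by $CB_n^\ell$, and that the empirical maxima satisfy the required tail control. This makes the bootstrap conditional CLT applicable.

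\textbf{Main obstacle.} I expect the hardest step to be handling non-identical distributions together with the group-wise resampling. The empirical bootstrap in \citet{chernozhukov2017clt} resamples from all $N$ observations jointly, whereas here resampling happens within groups, and the centering is at $\bar Z^\kcl$ rather than $\nu^\kcl$. The centering difference $\bar Z^\kcl-\nu^\kcl$ contributes a term of order $\sqrt{\log(Np)/N}\,B_n$ to the covariance error. This must be absorbed into the stated rate. I would handle the group-wise resampling by applying the argument blockwise, using independence across groups.
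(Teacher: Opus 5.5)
Your plan follows the paper's proof essentially step for step. It uses the same independent vectors $U_t=AV_t$. It verifies the variance, moment and tail conditions through Lemma~\ref{propm1} and $N/n_k\le C$. It invokes Proposition~2.1 of \citet{chernozhukov2017clt} for $\rho(\tilde T_{n,K},T^{(N)}_{n,K})$. It bounds $\rho^*$ via Lemma~\ref{lemma1} (row sums of $A$ at most $2$) plus a Gaussian comparison. For $\rho^{**}$ it applies the empirical-bootstrap argument conditionally and group by group.

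There are a few minor slips. Since $\lambda_{k,l}^2+\lambda_{l,k}^2=1$, the variance lower bound is simply $b_1$; (A.1) only gives $\lambda_{l,k}^2\ge c_1$, so your constant $2(1-c_2)b_1$ can fail. The conditional covariance is $\hat\Gamma_u=A\,\mathrm{Diag}(\hat\Sigma_Z^{(1)},\dots,\hat\Sigma_Z^{(K)})A^T$, with no factor $N/n_k$. Finally, the empirical third-moment and maxima bounds needed for $\rho^{**}$ require their own high-probability events, as in the proof of Proposition~4.3 of \citet{chernozhukov2017clt}; they do not follow from the covariance event.
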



By Lemma \ref{lem-rho} and \eqref{Rho} with probability at least $1-CN^{-1},$ we have
\begin{equation*}
    \rho^*({\tilde T_{n,K}, T^*_{n,K}}) \leq C(b_1,c_1,c_2) \Big(B^2_{n} \log^5(Np) \log^2(N)/N\Big)^{1/4}.
\end{equation*}

Finally, using the definition of $\CBk$, together with the condition of the theorem that $$\Big(B^2_{n} \log^7(Np)/N\Big)^{1/6} \to 0, \text{ and } \Big(B^2_{n} \log^3(Np)\log^{4/q}(N)/N^{1-4/q}\Big)^{1/3} \to 0, \text{ as } N\to\infty,$$  and the fact that under $H_{00}$, $\tilde T_{n,K} = T_{n,K}$ completes the proof of theorem 1. \qed

\hspace{1cm}

\noindent \textit{Proof of Theorem 2.} Recall \eqref{Ttill}, that is $\Tilde{T}_{n,K} = \| {A}S_n - {AD}_n\nu \|_\infty$. Then,
\begin{align} \label{mp1}
    \Prob\left( T_{n,K} > \CBk \mid H_{A3} \right) &= \Prob\left( \left\| {A}S_n \right\|_\infty > \CBk \mid H_{A3} \right)\nonumber\\ &\geq \Prob\left( \tilde T_{n,K} < \|{AD}_n\nu\|_\infty - \CBk \mid H_{A3} \right)\nonumber\\
    & \geq \Prob_{*}\left( T^{(G)}_{n,K} <  \|{AD}_n\nu\|_\infty - \CBk \mid H_{A3} \right) \nonumber\\&\hspace{2cm} - \underset{x\geq 0}{\sup} \left| \Prob\left( \tilde T_{n,K} < x \mid H_{A3} \right) - \Prob_{*}\left( T^{(G)}_{n,K} < x \mid H_{A3} \right)\right|.
\end{align} 
If Assumption (A.4) holds, then by Lemma \ref{lem-rho}, with probability tending to one, we have 
\begin{equation} \label{mp2}
    \underset{x\geq 0}{\sup} \left| \Prob\left( \tilde T_{n,K} < x \mid H_{A3} \right) - \Prob_{*}\left( T^{(G)}_{n,K} < x \mid H_{A3} \right)\right| \leq C \left\{B_{n} \log^{7}(Np)/N\right\}^{1/6} \to 0, \text{ as } N\to \infty.
\end{equation}
Let $u_j$ be the $j$th unit vector in $\mathbb{R}^{K(K+1)p/2}.$ Therefore, using union bound and Chernoff bound, for any $x>0$ we have
\begin{align} \label{mp3}
    \Prob_{*}\left( T^{(G)}_{n,K} > x \right)
    &= \Prob_{*}\left( \left\| S_{n,K}^{(G)} \right\|_\infty > x \right) \nonumber\\
    & \leq \sm \Prob_{*}\left(\Big|  {(S_{n,K}^{(G)})}_m \Big|  > x \right) \nonumber\\
    & \leq \sm \exp\left( \frac{-x^2}{2\; u_m^T \hat\Gamma_u u_m } \right) \nonumber\\
    & \leq 2K^2p\; \max_{1\leq m \leq K(K+1)p/2} \exp\left( \frac{-x^2}{2\; u_m^T \hat\Gamma_u u_m } \right) \nonumber\\
    & = 2K^2p\; \exp\left( \frac{-x^2}{2\|\hat\Gamma_u\|_\infty } \right),
\end{align}
as
\begin{align*}
   \max_{1\leq m \leq K(K+1)p/2} \Big\{u_m^T \hat\Gamma_u u_m \Big\} = \big\| \hat\Gamma_u \big\|_\infty.
\end{align*}
Take $x = \CBk$ in (\ref{mp3}), then by the definition of $\CBk$, we have
\begin{equation*}
    \alpha \leq \Prob_{*}\left( T^{(G)}_{n,K} > \CBk \right) \leq 2K^2p\; \exp\left( \frac{-{Q}^2_{\alpha,K}}{2 \big\| \hat\Gamma_u\big\|_\infty} \right).
\end{equation*}
Then for large enough $N$,
\begin{equation} \label{mp4}
    \CBk \leq \Big[ 2\log(2K^2p/\alpha) \big\| \hat\Gamma_u\big\|_\infty \Big]^{1/2} \leq \Big[ 4\log(Np) \big\| \hat\Gamma_u\big\|_\infty \Big]^{1/2}. 
\end{equation}
Using (\ref{m13}) and (\ref{m15}) from the proof of Lemma \ref{lem-rho}, with probability tending to one, we have
\begin{equation} \label{mp5}
    \big\|\hat\Gamma_u - \Gamma_u\big \|_\infty \leq C \Big(B^2_{n} \log(Np) \log^2(N) /N\Big)^{1/2}.
\end{equation}
Note that, by the Assumptions (A.1) and (A.3), we have
\begin{align} \label{mp6}
    \big\|\Gamma_u\big \|_\infty &= \underset{1\leq m \leq K(K+1)p/2}{\max} u_m^T \Gamma_u u_m \nonumber\\
    & = \underset{1\leq m \leq K(K+1)p/2}{\max}\; N^{-1} \; \st \Exp(U_{tm}^2) \nonumber\\
    & = \underset{1\leq k < l \leq K}{\max}\; \underset{(K_{k,l}-1)p+1\leq m \leq K_{k,l}p}{\max} \; N^{-1} \; \sum_{k=1}^{K} \sum_{t={N_{k-1}}+1}^{N_{k}} \Exp(U_{tm}^2) \nonumber \\ 
    & = \underset{1\leq k < l \leq K}{\max}\; \underset{1\leq j \leq p}{\max} \; \left\{ \lambda_{k,l}^2 n_k^{-1} \; \snk \Exp(Z_{ij}^\kcl - \nu_j^\kcl)^2  + \lambda_{l,k}^2 n_{l}^{-1} \; \snkk \Exp(Z_{ij}^{(l)} - \nu_j^{(l)})^2 \right\} \nonumber \displaybreak[1]\\
    & \leq \underset{1\leq k \leq K}{\max}\; \underset{1\leq j \leq p}{\max} \; n_k^{-1} \; \snk \Exp(Z_{ij}^\kcl - \nu_j^\kcl)^2 \nonumber\\
    & \leq 4 \underset{1\leq k \leq K}{\max}\; \underset{1\leq j \leq p}{\max} \; n_k^{-1} \; \snk \Exp(V_{ij}^\kcl - \mu_j^\kcl)^2 \nonumber\\
    & \leq 4 B_{n},
\end{align}
where the second inequity can be obtained by following the similar calculation as in part b of Lemma \ref{lemma1} with $\ell = 0$. Using (\ref{mp5}), (\ref{mp6}), and Assumption (A.1) and (A.6), for large enough $N$, with probability tending to one, we have for some universal constant $C_1>0,$
\begin{align} \label{mp7}
    &\quad \big\| \hat\Gamma_u\big\|_\infty \leq \big\|\hat\Gamma_u - \Gamma_u \big\|_\infty + \big\| \Gamma_u \big\|_\infty \leq C_1 B_{n}.
\end{align}
From (\ref{mp4}), we get with probability tending to 1,
\begin{equation} \label{mp8}
    \CBk \leq 2\Big[C_1 \log(Np) B_{n} \Big]^{1/2}. 
\end{equation}
Recall that, for any $1\leq k\leq K,$ using the Assumption (A.1), we have $N/n_k\leq C_2,$ where $C_2=K(c_2(1-c_2)^{-1}+(1-c_1)c_1^{-1}+1)>0$. Take $C_s = 4C_1^{1/2}C_2^{1/2}c_1^{-1/2}$ as defined in the Assumption (A.6). Then using Lemma \ref{prop4} with $\Delta_{n_1,n_2,p} = C_s \left\{B_{n} \log(Np)/N\right\}^{1/2}$ and Assumption (A.1), we have
\begin{align} \label{mp9}
     \quad \| {AD}_n\nu \|_\infty
     = \underset{\kKK}{\max} \sqrt{n_k n_{l}/(n_k+ n_{l})} \|\nu^{(k)} - \nu^{(l)}\|_\infty
    \geq 4\Big[C_1 \log(Np) B_{n} \Big]^{1/2}.
\end{align}
Therefore from (\ref{mp8}) and (\ref{mp9}), with probability tending to one, we get
\begin{equation} \label{mp10}
    \| {AD}_n\nu \|_\infty - \CBk \geq 2\Big[C_1 \log(Np) B_{n} \Big]^{1/2}.
\end{equation}
Finally, with probability tending to one, we get 
\begin{align} \label{mp11}
    \Prob_{*}\left( T^{(G)}_{n,K} < \|{AD}_n\nu\|_\infty - \CBk \mid H_{A3} \right) \nonumber
   &\geq \Prob_{*}\left( T^{(G)}_{n,K} < (4C_2B_{n}\log(Np))^{1/2} \mid H_{A3} \right) \nonumber\\
   &= 1 - \Prob_{*}\left( T^{(G)}_{n,K} \geq (4C_2B_{n}\log(Np))^{1/2} \mid H_{A3} \right) \nonumber\\
   & \geq 1 - 2K^2p\; \exp\left( -\frac{4C_2B_{n}\log(Np)}{2 \big\| \hat\Gamma_u\big\|_\infty} \right) \nonumber\\
   & \geq 1 - 2K^2/N^2 \to 1, \text{ as } N\to \infty,
\end{align}
where the first inequality follows from (\ref{mp10}), the second inequality follows from (\ref{mp3}), and the final inequality follows from (\ref{mp7}). Combining (\ref{mp2}) and (\ref{mp11}), we get with probability tending to one,
\begin{align*}
     \Prob \left( T_{n,K} > \CBk \mid H_{A3} \right) \to 1, \text{ as } N\to\infty.
\end{align*}
If Assumption (A.5) holds, then by Lemma \ref{lem-rho}, with probability tending to one, we have 
\begin{multline*}
    \underset{x\geq 0}{\sup} \left| \Prob\left( T_{n,K} < x \mid H_{A3} \right) - \Prob_{*}\left( T^{(G)}_{n,K} < x \mid H_{A3} \right)\right|\\\leq C \left\{ \Big(B^2_{n} \log^7(Np)/N\Big)^{1/6} + \Big(B^2_{n} \log^3(Np)\log^{4/q}(N)/N^{1-4/q}\Big)^{1/3}\right\} \to 0, \text{ as } N\to \infty,
\end{multline*}
and using (\ref{m13}) from the proof of Lemma \ref{lem-rho}, with probability tending to one, we have
\begin{align*}
    \big\|\hat\Gamma_u - \Gamma_u\big\|_\infty \leq C \Big\{\Big(B^2_{n} \log(np) \log^2(N) /N\Big)^{1/2} + B^2_{n} \log(Np) (\log^2(N))^{4/q}/N^{1-4/q} \Big\}.
\end{align*}
Then following the similar steps as before that is from eqn. $\eqref{mp1}$ to eqn. \eqref{mp11}, we get with probability tending to one,
\begin{align*}
     \Prob \left( T_{n,K} > \CBk \mid H_{A3} \right) \to 1, \text{ as } N\to\infty.
\end{align*}
This completes the proof of Theorem 2. \qed

\subsection*{Appendix B: proofs of the auxiliary lemmas} \label{sec: lem}
Here we provide the proofs of the Lemmas from Appendix A.

\noindent\textit{Proof of Lemma \ref{propm1}.}
(a) Define $b_1 = (1-\delta)b > 0$ . For any $j=1,\dots,p$ and $\kK,$
\begin{align*}
    \nk^{-1} \snk \Exp \Big(Z_{ij}^\kcl - \nu_{j}^\kcl\Big)^2 
    & \geq (1-2p^{-1}) \sigma^\kcl_{jj} -2p^{-1} \sum_{\underset{i\neq j}{i=1}}^{p} \sigma_{ij}^\kcl\\
    & \geq (1-2p^{-1}) \sigma_{jj}^\kcl - 2 p^{-1}a_p  \sigma_{jj}^\kcl \\
    & = \big\{1-2p^{-1}(1+a_p)\big\} \sigma_{jj}^\kcl\\
    &\geq (1-\delta) \sigma_{jj}^\kcl\\
    &= (1-\delta)\;n_k^{-1} \snk \Exp (V_{ij}^\kcl - \mu_{j}^\kcl)^2\\
    & \geq b_1,
\end{align*}
Here, the first inequality follows from the fact that $\nk^{-1} \snk \Exp \Big(Z_{ij}^\kcl - \nu_{j}^\kcl\Big)^2 $ is the $j$th diagonal element of $\G \Sigma^\kcl \G^T$ and $\Sigma^\kcl$ is positive semi-definite. The second, third, and final inequalities follow from Assumption (A.2).

\noindent  (b) For $\kK$ and $\ell=1,2,$
\begin{align*}
    & \underset{1\leq j \leq p}{\max}\; \nk^{-1} \snk \Exp \left(\left|Z_{ij}^\kcl - \nu_j^\kcl\right|^{2+\ell}\right) \\
    & = \underset{1\leq j \leq p}{\max}\;\nk^{-1} \snk \Exp \left(\left|\sum_{m=1}^p (\G)_{jm} \left(V_{im}^\kcl - \mu_m^\kcl\right)\right|^{2+\ell}\right) \\
    & = \underset{1\leq j \leq p}{\max}\; \nk^{-1} \left(\sum_{l=1}^p \left| (\G)_{jl} \right|\right)^{2+\ell} \snk \Exp \left\{\left|\sum_{m=1}^p \frac{(\G)_{jm}}{\sum_{l=1}^p \left| (\G)_{jl} \right|} \left(V_{im}^\kcl - \mu_m^\kcl\right)\right|^{2+\ell}\right\} \\
    & \leq 2^{2+\ell} \underset{1\leq j \leq p}{\max}\; \nk^{-1} \snk \Exp \left\{\left(\sum_{m=1}^p \frac{\left| (\G)_{jm} \right|}{\sum_{l=1}^p \left| (\G)_{jl} \right|} \left|V_{im}^\kcl - \mu_m^\kcl\right|\right)^{2+\ell}\right\} \\
    & \leq 2^{2+\ell} \underset{1\leq j \leq p}{\max}\; \nk^{-1} \snk \Exp \left(\sum_{m=1}^p \frac{\left| (\G)_{jm} \right|}{\sum_{l=1}^p \left| (\G)_{jl} \right|} \left| V_{im}^\kcl - \mu_m^\kcl\right|^{2+\ell}\right) \\
    & = 2^{2+\ell} \underset{1\leq j \leq p}{\max} \; \nk^{-1} \sum_{m=1}^p \frac{\left| (\G)_{jm} \right|}{\sum_{l=1}^p \left| (\G)_{jl} \right|} \snk \Exp \left( \left| V_{im}^\kcl - \mu_m^\kcl\right|^{2+\ell}\right) \\
    & \leq 2^{2+\ell} \underset{1\leq m \leq p}{\max}\; \nk^{-1} \snk \Exp \left( \left| V_{im}^\kcl - \mu_m^\kcl\right|^{2+\ell}\right) \\
    & \leq 2^{2+\ell} B_{n}^\ell\\
    & \leq L_{n}^\ell,
\end{align*}
where the first inequality follows from the triangle inequality and the fact that by the definition of the matrix $\G,$ we have $\max_{1\leq j \leq p} \sum_{l=1}^p \left| (\G)_{jl} \right| \leq 2$. The second inequality follows from Jensen's inequality. The third inequality follows as we have taken the maximum over $m = 1,\dots,p,$ and the fourth inequality follows from Assumption (A.3). The final inequality follows by taking $L_{n} = 8 B_{n}.$

\noindent (c) We have
\begin{align*}
     & \underset{ 1\leq k\leq K }{\max}\; \underset{1\leq i \leq \nk}{\max}\; \underset{1\leq j \leq p}{\max} \; \Exp \left\{\exp\left( L_{n}^{-1}\left|Z_{ij}^\kcl - \nu_j^\kcl \right| \right) \right\}\\
     & = \underset{ 1\leq k\leq K }{\max}\; \underset{1\leq i \leq \nk}{\max}\; \underset{1\leq j \leq p}{\max} \; \Exp \left\{\exp\left( L_{n}^{-1} \left| \sum_{m=1}^p (\G)_{jm} \left(V_{im}^\kcl - \mu_m^\kcl\right) \right| \right) \right\}\\
     & \leq \underset{ 1\leq k\leq K }{\max}\; \underset{1\leq i \leq \nk}{\max}\; \underset{1\leq j \leq p}{\max} \; \Exp \left\{\exp\left( B_{n}^{-1} \sum_{m=1}^p \left(\frac{\left|(\G)_{jm} \right|}{\sum_{l=1}^p \left| (\G)_{jl} \right|}\right) \left|V_{im}^\kcl - \mu_m^\kcl \right| \right) \right\}\\
     & \leq \underset{ 1\leq k\leq K }{\max}\; \underset{1\leq i \leq \nk}{\max}\; \underset{1\leq j \leq p}{\max} \; \sum_{m=1}^p \left(\frac{\left|(\G)_{jm}\right|}{\sum_{l=1}^p \left| (\G)_{jl} \right|}\right) \Exp \left\{\exp\left( B_{n}^{-1} \left| V_{im}^\kcl - \mu_m^\kcl \right|\right) \right\} \\
     & \leq \underset{ 1\leq k\leq K }{\max}\; \underset{1\leq i \leq \nk}{\max}\; \underset{1\leq m \leq p}{\max} \; \Exp \left\{\exp\left( B_{n}^{-1} \left|  V_{im}^\kcl - \mu_m^\kcl \right| \right) \right\}  \leq 2,
\end{align*}
where the first inequality is due to the fact $\max_{1\leq j \leq p} \sum_{l=1}^p \left| (\G)_{jl} \right| \leq 2$ and $L_{n} = 8 B_{n}.$ The second inequality follows from Jensen's inequality. The third inequality follows by taking maximum over $m = 1,\dots,p,$ and the final inequality follows from Assumption (A.4).

\noindent (d) We have
\begin{align*}
     & \underset{ 1\leq k\leq K }{\max}\; \underset{1\leq i \leq \nk}{\max}\; \; \Exp \left\{\underset{1\leq j \leq p}{\max} \left( L_{n}^{-1} \left|Z_{ij}^\kcl - \nu_j^\kcl \right| \right)^q \right\}\\
     & = \underset{ 1\leq k\leq K }{\max}\; \underset{1\leq i \leq \nk}{\max}\; \Exp \left\{\underset{1\leq j \leq p}{\max} \left( \left| L_{n}^{-1} \sum_{m=1}^p (\G)_{jm} \left(V_{im}^\kcl - \mu_m^\kcl\right) \right| \right)^q \right\}\\
     & \leq \underset{ 1\leq k\leq K }{\max}\; \underset{1\leq i \leq \nk}{\max}\; \Exp \left\{\underset{1\leq j \leq p}{\max} \left( B_{n}^{-1} \sum_{m=1}^p \left(\frac{ \left|(\G)_{jm}\right| }{ \sum_{l=1}^p \left| (\G)_{jl} \right| } \right) \left|V_{im}^\kcl - \mu_m^\kcl \right|\right)^q \right\}\\
     & \leq \underset{ 1\leq k\leq K }{\max}\; \underset{1\leq i \leq \nk}{\max}\; \Exp \left\{\underset{1\leq m \leq p}{\max} \left( B_{n}^{-1} \left| V_{im}^\kcl - \mu_m^\kcl \right| \right)^q \right\}\\ 
     & \leq 2,
\end{align*}
where the first inequality is due to the fact $\max_{1\leq j \leq p} \sum_{l=1}^p \left| (\G)_{jl} \right| \leq 2$ and $L_{n} = 8 B_{n}.$ The second inequality follows by taking the maximum over $m = 1,\dots,p,$ and the final inequality follows from Assumption (A.5). \qed

\medskip

\noindent \textit{Proof of Lemma \ref{prop4}.}
For $\kK$ define $\bar\mu^\kcl = p^{-1} \sum_{j=1}^p \mu_j^\kcl.$ Then using triangle inequality we have
\begin{align*}
    \max_{\kKK} \| \nu^\kcl - \nu^\kclp \|_\infty
    & = \max_{\kKK} \left\|\G (\mu^\kcl - \mu^\kclp) \right\|_\infty\\
    & = \max_{\kKK} \left\|(\mu^\kcl - \mu^\kclp) - (\bar\mu^\kcl - \bar\mu^\kclp)1_p \right\|_\infty\\
    & \geq \max_{\kKK} \Big\{ \left\|\mu^\kcl - \mu^\kclp \right\|_\infty - \left|\bar\mu^\kcl - \bar\mu^\kclp \right| \Big\}\\
    & \geq \max_{\kKK} \Big\{\| \mu^\kcl - \mu^\kclp \|_\infty - p^{-1}\| \mu^\kcl - \mu^\kclp \|_1 \Big\}\\
    & \geq \Delta_{n,p}.
\end{align*} \qed

\medskip

\noindent \textit{Proof of Lemma \ref{lemma1}.}
Let ${Q}{P}^T = {M}$ and $M_{ij}$ be the $(i,j)$th element of ${M}.$ Then using triangle inequality and the condition of the lemma, we have
\begin{equation*}
    \|{Q}{P}^T\|_\infty = \underset{1\leq i,j \leq p}{\max} |M_{ij}|
    \leq \underset{1\leq i,j \leq p}{\max} \sum_{m=1}^p |Q_{im}| | P_{jm} |
    \leq \underset{1\leq i,k \leq p}{\max}|Q_{ik}| \; \underset{1\leq j \leq p}{\max} \sum_{m=1}^p | P_{jm} |
    \leq s \|{Q}\|_\infty.
\end{equation*}
Similar calculation shows $\|{P}{Q}\|_\infty \leq s \|{Q}\|_\infty.$
Combining these two, we get
\begin{equation*}
    \|{P}{Q}{P}^T\|_\infty \leq s \|{P}{Q}\|_\infty \leq s^2 \|{Q}\|_\infty.
\end{equation*} \qed

\medskip

\noindent \textit{Proof of Lemma \ref{lem-rho}.} (a) Recall the construction of the vector $U_t$ from the proof of Theroem 1. For $t =1,\dots,N,$ consider $U_t = {A} V_t \in \R^{K(K+1)p/2}$ as a collection of $K(K+1)/2$ sub-vectors of lengths $p$ each. For $\kK,$ and $i=1,\dots,n_k,$ set $t = i + N_{k-1},$ and we have $(K-k)$ sub-vectors of $U_t$ are equal to $\lamk N^{1/2}\nk^{-1/2} (Z_i^\kcl-\nu^\kcl)$ where $\kKK,$ $(k-1)$ sub-vectors of $U_t$ are equal to $-\lamk N^{1/2}\nk^{-1/2} (Z_i^\kcl-\nu^\kcl)$ where $1\leq l<k\leq K$ and the rests are all zeroes. Furthermore, for $\kKK$ and $j=1,\dots,p$ set $m=\Kkk p +j,$ where $K_{k,l} = \Big((k-1)K+l-k(k+1)/2-1\Big),$ and we have the $m$th entry of $N^{-1/2}\st U_t$ as $N^{-1/2}\st U_{tm} = \lamk S_{n_k j}^\kcl- \lamkk S_{n_{l} j}^\kclp .$

Then for $\kKK,$ and $m=(\Kkk p +1),\dots,(\Kkk+1)p,$ $\Exp(U_{tm}) = 0$ and
 \begin{equation} \label{eq1}
 \begin{split}
    N^{-1} \st \Exp \Big(U_{tm}^2\Big)
    &= \Exp \Big( \big({A}S_n\big)^2_m \Big)\\
    & = \Exp \Big(\lamk S_{n_k j}^\kcl- \lamkk S_{n_{l} j}^\kclp \Big)^2\\
    & = \lamk^2 \nk^{-1} \snk \Exp \Big(Z_{ij}^\kcl - \nu_{j}^\kcl\Big)^2  + \lamkk^2 \nkk^{-1} \snkk \Exp \Big(Z_{ij}^\kclp - \nu_{j}^\kclp\Big)^2 \\
    & \geq b_1,
    \end{split}
\end{equation}
where the final inequality follows from Assumption (A.2) and Lemma \ref{propm1}. Hence we get for all $m = 1,\dots,K(K+1)p/2,$ $N^{-1} \st \Exp \Big(U_{tm}^2\Big) \geq b_1.$ Note that, from the Assumption (A.1), for any $1\leq k\leq K,$ we get $N/n_k \leq K(c_2(1-c_2)^{-1}+(1-c_1)c_1^{-1}+1)$. From the Assumption (A.3) and Lemma \ref{propm1}, we get for $\ell=1,2,$
 \begin{align} \label{eq2}
    & \underset{1\leq m \leq K(K+1)p/2}{\max} N^{-1} \st \Exp \left(\left|U_{tm}\right|^{2+\ell}\right) \nonumber\\
    & = \underset{\kKK}{\max}\; \underset{(\Kkk-1)p+1\leq m \leq \Kkk p}{\max} N^{-1} \st \Exp \left(\left|U_{tm}\right|^{2+\ell}\right)\nonumber\\
    & = \underset{\kKK}{\max}\; \underset{1\leq j \leq p}{\max} N^{-1} \left[ \snk \Exp \left(\left| \lamk N^{1/2} \nk^{-1/2} \left(Z_{ij}^\kcl - \nu_j^\kcl\right)\right|^{2+\ell}\right) \right.\nonumber\\& \hspace{4cm} \left. + \snkk \Exp \left(\left| \lamkk N^{1/2} \nkk^{-1/2} \left(Z_{ij}^\kclp - \nu_j^\kclp\right)\right|^{2+\ell}\right) \right] \displaybreak[1] \nonumber\\
    & \leq \underset{\kKK}{\max}\; \underset{1\leq j \leq p}{\max} (N/n_k)^{\ell/2} \lamk^{2+\ell} \nk^{-1} \snk \Exp \left(\left|Z_{ij}^\kcl - \nu_j^\kcl\right|^{2+\ell}\right) \nonumber\\&\hspace{4cm} + \underset{\kKK}{\max}\; \underset{1\leq j \leq p}{\max} (N/n_{l})^{\ell/2} \lamkk^{2+\ell} \nkk^{-1} \snkk \Exp \left(\left|Z_{ij}^\kclp - \nu_j^\kclp\right|^{2+\ell}\right)\nonumber\\
    & \leq C L_{n}^\ell \leq G_{n}^\ell,
\end{align}
where $G_n$ is a sequence that depends on $B_n$. If (A.4) holds, then by Lemma \ref{propm1} we have
\begin{align*}
     & \underset{1\leq t \leq n}{\max}\; \underset{1\leq m \leq K(K+1)p/2}{\max} \; \Exp \left\{\exp( G_{n}^{-1} \left|U_{tm}\right| ) \right\}\\
     & \leq \underset{ 1\leq k\leq K }{\max}\; \underset{1\leq i \leq n}{\max}\; \underset{1\leq j \leq p}{\max} \; \left[\Exp \left\{\exp( L_{n}^{-1} \sqrt{n/(Cn_k)} \left|Z_{ij}^\kcl - \nu_j^\kcl \right| ) \right\} \vee 1\right]\\
     & \leq \underset{ 1\leq k\leq K }{\max}\; \underset{1\leq t \leq n}{\max}\; \underset{1\leq j \leq p}{\max} \; \left[\Exp \left\{\exp( L_{n}^{-1} \left|Z_{ij}^\kcl - \nu_j^\kcl \right|) \right\} \vee 1\right]\\
     & \leq 2,
\end{align*}
where the first inequality follows from the fact that for any $t=1,\dots, N$ and $m=1,\dots, K(K+1)p/2,$ $U_{tm}$ is either $0$ or $\lambda_{k,l} N^{1/2} \nk^{-1/2} \left|Z_{ij}^\kcl - \nu_j^\kcl \right|$ for some $k\neq l = 1,\dots, K,\;i=1,\dots,n_k,\;\text{ and }j=1,\dots,p$ based on $t$ and $m$ and $\lambda_{k,l} \leq 1$ for any $k\neq l = 1,\dots, K.$ The second inequality follows from the fact that for any $k=1,\dots, K,$ $N/n_k\leq C.$


Therefore, by Proposition 2.1 in Chernozhukov, Chetverikov, and Kato (2017), we get
\begin{equation*}
    \rho({\tilde T_{n,K}, T_{n,K}^{(N)}}) \leq C(b_1,c_1,c_2) \Big(G^2_{n} \log^7(Np)/N\Big)^{1/6} \leq C \Big(B^2_{N} \log^7(Np)/N\Big)^{1/6} .
\end{equation*}

To bound the quantity $\rho^*(T_{n,K}^{(N)}, T_{n,K}^{(G)}),$ note that we can rewrite $\hat\Gamma_u$ as
\begin{equation*}
    \hat\Gamma_u = \text{Cov}_*({A}S_n^*) = {A} \text{Cov}_*(S_n^*){A}^T = {A}\hat\Gamma {A}^T,
\end{equation*}
where $\text{Cov}_*(.)$ is the covariance matrix computed conditional on the data and $\hat\Gamma = \text{Cov}_*(S_n^*) = diag(\hat\Gamma^\kone,\dots,\hat\Gamma^\kk),$ and $\hat\Gamma^\kcl = \text{Cov}_*(Z_1^{*\kcl}) =  \nk^{-1}\snk (Z_i^\kcl - \bar{Z}^\kcl) $ $(Z_i^\kcl - \bar{Z}^\kcl)^T,$ for $\kK$. Similarly, we can rewrite $\Gamma_u$ as
 \begin{align*}
     \Gamma_u = N^{-1}\st \Exp(U_tU_t^T) = N^{-1}\st {A}\Exp(V_tV_t^T) {A}^T = {A}\Gamma {A}^T,
 \end{align*}
 where $\Gamma = N^{-1}\st \Exp(V_tV_t^T) = Diag(\Gamma^\kone,\dots,\Gamma^\kk)$ and $\Gamma^\kcl = \G \Sigma^\kcl \G$ for $\kK$.
For $\kK$, define 
\begin{align*}
    &\hat\Delta_{n}^{(2k-1)} = \underset{1\leq j,k\leq p}{\max} \Big|\nk^{-1} \snk \Big\{ (Z_{ij}^\kcl - \nu_j^\kcl)(Z_{ik}^\kcl - \nu_k^\kcl) - \Exp(Z_{ij}^\kcl - \nu_j^\kcl)(Z_{ik}^\kcl - \nu_k^\kcl) \Big\} \Big|,\\
    & \hat\Delta_{n}^{(2k)} = \underset{1\leq j\leq p}{\max} \Big| \bar{Z}_{j}^\kcl - \nu_j^\kcl \Big|.
    \end{align*}
Then, for any $\kK,$ we have
\begin{equation*}
    \|\hat\Gamma^\kcl - \Gamma^\kcl\|_{\infty} \leq \hat\Delta_{n}^{(2k-1)} + {(\hat\Delta_{n}^{(2k)})}^2.
\end{equation*}
Define,
    \begin{align*}
    \hat\Delta_{n} = \max_{1\leq k\leq K} \{\hat\Delta_{n}^\kcl + (\hat\Delta_{n}^{(k+2)})^2\}.
\end{align*}
Note that, 
\begin{equation*}
 \underset{1\leq m\leq K(K+1)p/2}{\max} \sum_{k=1}^{Kp}|A_{mk}| = \underset{\kKK}{\max}  (\lamk + \lamkk)  \leq 2.
\end{equation*}
Then, by Lemma \ref{lemma1}, it follows that
\begin{align} \label{m13}
    \| \hat\Gamma_u - \Gamma_u \|_\infty \leq \|{A}(\hat\Gamma - \Gamma){A}^T\|_{\infty} \leq 4\|\hat\Gamma - \Gamma\|_{\infty}\leq 4\max_{1\leq k\leq K}\|\hat\Gamma^\kcl - \Gamma^\kcl\|_{\infty}\leq C \hat\Delta_{n}.
\end{align}
Let, $\bar\Delta_{n}$ be a positive real number. By Proposition 2.1 in \cite{chernozhukov2022improved}, on the event $E=\{ \hat\Delta_{n} \leq \bar\Delta_{n} \}$, we get
\begin{align*}
    \rho^*({T_{n,K}^{(N)}, T_{n,K}^{(G)}}) \leq C(b_1,c_1,c_2) {\bar\Delta_{n}}^{1/2} \log(p),
\end{align*}

If Assumption (A.4) holds, then we take $\bar\Delta_{n} = \Big(G^2_{n} \log(Np) \log^2(N) /N\Big)^{1/2}.$ By Proposition 4.1 in \cite{chernozhukov2017clt}, we have $\Prob(E) \geq 1-CN^{-1}.$ Therefore with probability at least $1-CN^{-1},$ we get
\begin{equation} \label{m15}
    \begin{split}
         \rho^*({T_{n,K}^{(N)}, T^{(G)}_{n,K}}) &\leq C(b_1,c_1,c_2) \Big(G^2_{n} \log^5(Np) \log^2(N)/N\Big)^{1/4} \\&\leq C \Big(B^2_{n} \log^5(Np) \log^2(N)/N\Big)^{1/4}.
    \end{split}
\end{equation} 
Next, we find a bound for the quantity $\rho^{**}({ T^*_{n,K}, T_{n,K}^{(G)}}).$ For $\phi\geq 1,$ define
\begin{equation*}
    \begin{split}
        &\widehat L_n = \max_{1\leq m\leq K(K+1)p/2} N^{-1} \st \Exp_*|U_{tm}^*|^3,\\
        &\widehat{M}_{n,U}(\phi) = N^{-1} \st \Exp_*\bigg[ \max_{1\leq m\leq K(K+1)p/2} |U_{tm}^*|^3 I \bigg\{ \max_{1\leq m\leq K(K+1)p/2} |U_{tm}^*| > n^{1/2} (4\phi\log(p))^{-1} \bigg\} \bigg],\\
        &\widehat{M}_{n,G}(\phi) = \Exp_*\bigg[ \max_{1\leq m\leq K(K+1)p/2} |S_{n,K}^{(G)}|^3 I \bigg\{ \max_{1\leq m\leq K(K+1)p/2} |S_{n,K}^{(G)}| > n^{1/2} (4\phi\log(p))^{-1} \bigg\} \bigg],\\
        &\widehat M_n(\phi) = \widehat{M}_{n,U}(\phi) + \widehat{M}_{n,G}(\phi).
    \end{split}
\end{equation*}
Then, applying Theorem 2.1 of \cite{chernozhukov2017clt}, conditional on the data, we can find constants $\widehat L_n$ and $\widehat M_n$, such that
\begin{equation} \label{rhoeb}
    \rho^{**}({ T^*_{n,K}, T_{n,K}^{(G)}}) \leq C(b_1) \Big\{ \Big(\widehat L_n^2 \log^7(p) N^{-1}\Big)^{1/6}  + \widehat M_n \widehat L_n^{-1} \Big\},
\end{equation}
on the set $F = \Big\{N^{-1} \st (U_{tm} - \bar U_{Nm})^2 \geq b_1, \text{ for all } m=1,\dots,K(K+1)p/2\Big\} \cap \Big\{ \widehat L_n \leq  \overline L_n \Big\} \cap \Big\{ \widehat M_n\Big(C(b_1) (\overline L_n^2 \log^4(p) N^{-1})^{-1/6}\Big) \leq  \overline M_n \Big\}.$

From \eqref{m13}, it follows that for large enough $N,$ $\Prob(\| \hat\Gamma_u - \Gamma_u \|_\infty \leq b_1/2)\geq 1 - CN^{-1}.$ Furthermore, following similar calculations in \eqref{eq2} with $\ell = 0,$ it is straightforward to show that $N^{-1}\st \Exp (U_{tm}^2) \leq G_n \leq CB_n,$ for all $ m=1,\dots, K(K+1)p/2$. Consequently from \eqref{eq1} and \eqref{m13}, it follows that $ b_1/2 \leq N^{-1} \st (U_{tm} - \bar U_{Nm})^2 \leq C B_n$ with probability at least $1 - CN^{-1}.$

Therefore to quantify \eqref{rhoeb}, it is enough to find bounds for the quantities $\widehat L_n$ and $\widehat M_n$ or find a value for $\overline L_n$ and $\overline M_n$ such that the probability of the set $F$ tends to one. To find bounds for $\widehat L_n$ and $\widehat M_n$, we define
\begin{equation*}
    \begin{split}
        \widehat L_{n}^\kcl &= \max_{1\leq j\leq p} \nk^{-1} \snk |Z_{ij}^\kcl - \bar Z^\kcl|^3,\\
        \widehat M_{n,Z}^\kcl(\phi) &= \max_{1\leq j\leq p} \nk^{-1} \snk |Z_{ij}^\kcl - \bar Z^\kcl|^3  I \bigg\{ \max_{1\leq j\leq p}  \snk |Z_{ij}^\kcl - \bar Z^\kcl| > n^{1/2} (4\phi \log(p))^{-1} \bigg\},
    \end{split}
\end{equation*}
for $\kK$ and $\phi\geq 1.$
Following the similar calculations as in \eqref{eq2} and the fact that for $\kK,$ $Z_1^{*\kcl},\dots,Z_\nk^{*\kcl}$ are independent and identical samples from the empirical distribution of $Z^\kcl$, we get
\begin{equation*}
    \begin{split}
        \widehat L_n &= \max_{1\leq m\leq K(K+1)p/2} N^{-1} \st \Exp_*|U_{tm}^*|^3,\\
        &= \max_{\kKK} \max_{1\leq j\leq p} N^{1/2} \bigg\{ \nk^{-3/2} \lamk^3  \snk |Z_{ij}^\kcl - \bar Z^\kcl|^3 + \nkk^{-3/2} \lamkk^3  \snkk |Z_{ij}^\kclp - \bar Z^\kclp|^3\bigg\}\\
        &\leq 2C^{1/2} \max_{1\leq k\leq K} \max_{1\leq j\leq p} \nk^{-1} \snk |Z_{ij}^\kcl - \bar Z^\kcl|^3\\
        &= 2C^{1/2} \max_{1\leq k\leq K} \widehat L_n^\kcl.
    \end{split}
\end{equation*}
Following similar arguments, we get
\begin{align*}
        \widehat{M}_{n,U}(\phi) &= N^{-1} \st \Exp_*\bigg[ \max_{1\leq m\leq K(K+1)p/2} |U_{tm}^*|^3 I \bigg\{ \max_{1\leq m\leq K(K+1)p/2} |U_{tm}^*| > n^{1/2} (4\phi\log(p))^{-1} \bigg\} \bigg] \displaybreak[1]\\
        &= N^{-1} \sK \sum_{t=N_{k-1}+1}^{N_k} \Exp_*\bigg[ \max_{1\leq m\leq K(K+1)p/2} |U_{tm}^*|^3 I \bigg\{ \max_{1\leq m\leq K(K+1)p/2} |U_{tm}^*| > n^{1/2} (4\phi\log(p))^{-1} \bigg\} \bigg]\\
        &= N^{1/2} \sK \snk \underset{l\neq k}{\max_{1\leq l \leq K}} \max_{1\leq j\leq p} \nk^{-3/2} \lamk^3  \snk |Z_{ij}^\kcl - \bar Z^\kcl|^3 \\&\hspace{3cm} I \bigg\{ \underset{l\neq k}{\max_{1\leq l \leq K}} \max_{1\leq j\leq p} N\nk^{-1} \lamk  \snk |Z_{ij}^\kcl - \bar Z^\kcl| > n^{1/2} (4\phi\log(p))^{-1} \bigg\}\\
        &\leq K C^{1/2} c_4^3 \max_{1\leq k\leq K} \max_{1\leq j\leq p} \nk^{-1} \snk |Z_{ij}^\kcl - \bar Z^\kcl|^3 I \bigg\{ \max_{1\leq j\leq p}  \snk |Z_{ij}^\kcl - \bar Z^\kcl| > n^{1/2} (4\phi'\log(p))^{-1} \bigg\}\\
        &= K C^{1/2} c_4^3 \max_{1\leq k\leq K} \widehat{M}^\kcl_{n,Z}(\phi),
\end{align*}
where $c_4 = \max\{ c_2^{1/2}, (1-c_1)^{1/2} \}$ and $\phi' = Cc_3\phi$. Following the proof of Proposition 4.3 of \cite{chernozhukov2017clt}, for $\kK,$ we can obtain
\begin{equation*}
\begin{split}
    &\Prob\Big(\widehat L_n^\kcl > CB_n \Big) \leq CN^{-1}, \\
    &\Prob\Bigg(\widehat{M}^\kcl_{n,Z}\Big(C (\overline L_n^2 \log^4(p) N^{-1})^{-1/6}\Big) > 0 \Bigg) \leq CN^{-1},
\end{split}
\end{equation*}
and
\begin{equation*}
    \Prob\Bigg(\widehat{M}_{n,G}\Big(C (\overline L_n^2 \log^4(p) N^{-1})^{-1/6}\Big) > 0 \Bigg) \leq CN^{-1}.
\end{equation*}
Therefore, using union bound, it follows that 
\begin{equation*}
\begin{split}
    &\Prob\Big(\widehat L_n \leq CB_n \Big) \geq 1-CN^{-1}, \\
    &\Prob\Bigg(\widehat{M}_{n,U}\Big(C (\overline L_n^2 \log^4(p) N^{-1})^{-1/6}\Big) = 0 \Bigg) \geq 1-CN^{-1}.
\end{split}
\end{equation*}
Finally, from \eqref{rhoeb}, with probability at least $1-CN^{-1},$ we get
\begin{equation*}
    \rho^{**}( T_{n,K}^*, T_{n,K}^{(G)}) \leq  C \Big(B^2_{n} \log^7(Np)/N\Big)^{1/6}.
\end{equation*}

\noindent (b) The proof of this part follows by using Assumption (A.5) instead of Assumption (A.4) in the calculations of part a of this proof.
\qed

\subsection*{Appendix C: additional simulation results}\label{sec: sim}

\begin{table}[ht]
\centering
\small
\caption{Empirical size and power under Setting (M.3) with CZM imputation and additional zero inflation level $\eta = 0.45$. The average proportion of zeros before imputation is around $55\%$--$65\%$.}
\label{Tables1}
\begin{adjustbox}{max width=0.85\textwidth}
\begin{threeparttable}
\begin{tabular}{cclcccccccccc}
\toprule
 & & & \multicolumn{10}{c}{$\delta$} \\
\cmidrule(lr){4-13}
$p$ & Dist. & Method & \multicolumn{2}{c}{0} & \multicolumn{2}{c}{0.25} & \multicolumn{2}{c}{0.50} & \multicolumn{2}{c}{0.75} & \multicolumn{2}{c}{1} \\
\cmidrule(lr){4-5}\cmidrule(lr){6-7}\cmidrule(lr){8-9}\cmidrule(lr){10-11}\cmidrule(lr){12-13}
 & & & $N_1$ & $N_2$ & $N_1$ & $N_2$ & $N_1$ & $N_2$ & $N_1$ & $N_2$ & $N_1$ & $N_2$ \\
\midrule
200 & D1 & EBC & 0.045 & 0.059 & 0.068 & 0.076 & 0.095 & 0.165 & 0.190 & 0.471 & 0.366 & 0.804 \\
 &  & CLL & 0.084 & 0.062 & 0.092 & 0.083 & 0.118 & 0.173 & 0.179 & 0.418 & 0.366 & 0.755 \\
 &  & MECAF & 0.074 & 0.060 & 0.086 & 0.079 & 0.101 & 0.167 & 0.162 & 0.404 & 0.348 & 0.741 \\
\addlinespace[1pt]
 & D2 & EBC & 0.049 & 0.064 & 0.057 & 0.075 & 0.084 & 0.195 & 0.168 & 0.480 & 0.403 & 0.843 \\
 &  & CLL & 0.081 & 0.077 & 0.081 & 0.088 & 0.108 & 0.197 & 0.177 & 0.453 & 0.382 & 0.793 \\
 &  & MECAF & 0.074 & 0.072 & 0.075 & 0.084 & 0.098 & 0.190 & 0.162 & 0.444 & 0.363 & 0.786 \\
\midrule
500 & D1 & EBC & 0.047 & 0.088 & 0.049 & 0.111 & 0.094 & 0.225 & 0.248 & 0.667 & 0.629 & 0.970 \\
 &  & CLL & 0.076 & 0.097 & 0.082 & 0.122 & 0.116 & 0.209 & 0.233 & 0.561 & 0.460 & 0.934 \\
 &  & MECAF & 0.070 & 0.092 & 0.075 & 0.115 & 0.104 & 0.202 & 0.210 & 0.547 & 0.431 & 0.928 \\
\addlinespace[1pt]
 & D2 & EBC & 0.036 & 0.090 & 0.054 & 0.101 & 0.092 & 0.285 & 0.286 & 0.715 & 0.608 & 0.980 \\
 &  & CLL & 0.081 & 0.095 & 0.093 & 0.109 & 0.134 & 0.247 & 0.232 & 0.592 & 0.472 & 0.943 \\
 &  & MECAF & 0.073 & 0.090 & 0.082 & 0.101 & 0.123 & 0.233 & 0.204 & 0.581 & 0.448 & 0.941 \\
\midrule
1000 & D1 & EBC & 0.041 & 0.047 & 0.041 & 0.077 & 0.089 & 0.272 & 0.322 & 0.809 & 0.743 & 0.996 \\
 &  & CLL & 0.085 & 0.056 & 0.103 & 0.094 & 0.124 & 0.225 & 0.251 & 0.659 & 0.531 & 0.980 \\
 &  & MECAF & 0.077 & 0.050 & 0.085 & 0.087 & 0.105 & 0.209 & 0.226 & 0.639 & 0.494 & 0.976 \\
\addlinespace[1pt]
 & D2 & EBC & 0.045 & 0.057 & 0.050 & 0.064 & 0.104 & 0.258 & 0.330 & 0.827 & 0.754 & 1.000 \\
 &  & CLL & 0.074 & 0.074 & 0.100 & 0.069 & 0.130 & 0.224 & 0.248 & 0.703 & 0.525 & 0.979 \\
 &  & MECAF & 0.067 & 0.068 & 0.081 & 0.065 & 0.113 & 0.208 & 0.218 & 0.685 & 0.493 & 0.976 \\
\bottomrule
\end{tabular}
\end{threeparttable}
\end{adjustbox}
\end{table}

\begin{table}[ht]
\centering
\caption{Empirical size and power under Setting (M.3) with CZM imputation and additional zero inflation level $\eta = 0.60$. The average proportion of zeros before imputation is around $68\%$--$75\%$.}
\label{Tables2}
\begin{adjustbox}{max width=0.85\textwidth}
\begin{threeparttable}
\begin{tabular}{cclcccccccccc}
\toprule
 & & & \multicolumn{10}{c}{$\delta$} \\
\cmidrule(lr){4-13}
$p$ & Dist. & Method & \multicolumn{2}{c}{0} & \multicolumn{2}{c}{0.25} & \multicolumn{2}{c}{0.50} & \multicolumn{2}{c}{0.75} & \multicolumn{2}{c}{1} \\
\cmidrule(lr){4-5}\cmidrule(lr){6-7}\cmidrule(lr){8-9}\cmidrule(lr){10-11}\cmidrule(lr){12-13}
 & & & $N_1$ & $N_2$ & $N_1$ & $N_2$ & $N_1$ & $N_2$ & $N_1$ & $N_2$ & $N_1$ & $N_2$ \\
\midrule
200 & D1 & EBC & 0.064 & 0.085 & 0.060 & 0.091 & 0.065 & 0.145 & 0.119 & 0.262 & 0.226 & 0.536 \\
 &  & CLL & 0.088 & 0.091 & 0.088 & 0.098 & 0.095 & 0.140 & 0.135 & 0.245 & 0.199 & 0.451 \\
 &  & MECAF & 0.078 & 0.086 & 0.076 & 0.092 & 0.083 & 0.136 & 0.118 & 0.235 & 0.188 & 0.435 \\
\addlinespace[1pt]
 & D2 & EBC & 0.059 & 0.089 & 0.052 & 0.091 & 0.075 & 0.131 & 0.119 & 0.292 & 0.211 & 0.577 \\
 &  & CLL & 0.077 & 0.096 & 0.082 & 0.094 & 0.095 & 0.142 & 0.137 & 0.276 & 0.211 & 0.508 \\
 &  & MECAF & 0.070 & 0.093 & 0.071 & 0.090 & 0.093 & 0.133 & 0.125 & 0.266 & 0.197 & 0.494 \\
\midrule
500 & D1 & EBC & 0.048 & 0.109 & 0.031 & 0.111 & 0.062 & 0.201 & 0.171 & 0.491 & 0.346 & 0.868 \\
 &  & CLL & 0.076 & 0.129 & 0.062 & 0.141 & 0.083 & 0.201 & 0.126 & 0.389 & 0.181 & 0.656 \\
 &  & MECAF & 0.065 & 0.126 & 0.054 & 0.132 & 0.074 & 0.194 & 0.104 & 0.376 & 0.163 & 0.640 \\
\addlinespace[1pt]
 & D2 & EBC & 0.034 & 0.119 & 0.042 & 0.117 & 0.054 & 0.249 & 0.146 & 0.507 & 0.378 & 0.862 \\
 &  & CLL & 0.064 & 0.134 & 0.071 & 0.151 & 0.101 & 0.224 & 0.122 & 0.379 & 0.200 & 0.644 \\
 &  & MECAF & 0.053 & 0.129 & 0.065 & 0.141 & 0.093 & 0.212 & 0.111 & 0.366 & 0.185 & 0.628 \\
\midrule
1000 & D1 & EBC & 0.024 & 0.035 & 0.034 & 0.047 & 0.041 & 0.134 & 0.152 & 0.526 & 0.479 & 0.930 \\
 &  & CLL & 0.053 & 0.043 & 0.053 & 0.058 & 0.066 & 0.104 & 0.088 & 0.291 & 0.191 & 0.696 \\
 &  & MECAF & 0.046 & 0.038 & 0.048 & 0.054 & 0.054 & 0.101 & 0.076 & 0.281 & 0.175 & 0.678 \\
\addlinespace[1pt]
 & D2 & EBC & 0.024 & 0.049 & 0.028 & 0.054 & 0.056 & 0.146 & 0.184 & 0.546 & 0.466 & 0.944 \\
 &  & CLL & 0.043 & 0.049 & 0.051 & 0.063 & 0.072 & 0.116 & 0.107 & 0.307 & 0.218 & 0.693 \\
 &  & MECAF & 0.037 & 0.047 & 0.041 & 0.059 & 0.065 & 0.110 & 0.100 & 0.289 & 0.190 & 0.675 \\
\bottomrule
\end{tabular}
\end{threeparttable}
\end{adjustbox}
\end{table}

\newpage
\begin{table}[ht]
\centering
\small
\caption{Empirical size and power for the multi-sample EBC test with CZM imputation and additional zero inflation level $\eta = 0.60$. The average proportion of zeros before imputation is around $65\%$--$70\%$ percent.}
\label{Table7}
\begin{adjustbox}{max width=0.85\textwidth}
\begin{threeparttable}
\begin{tabular}{ccl*{10}{c}}
\toprule
 & & & \multicolumn{10}{c}{$\delta$} \\
\cmidrule(lr){4-13}
$p$ & Dist. & $K$
& \multicolumn{2}{c}{0}
& \multicolumn{2}{c}{0.5}
& \multicolumn{2}{c}{1}
& \multicolumn{2}{c}{1.5}
& \multicolumn{2}{c}{2} \\
\cmidrule(lr){4-5}\cmidrule(lr){6-7}\cmidrule(lr){8-9}\cmidrule(lr){10-11}\cmidrule(lr){12-13}
 & &
& $N_3$ & $N_4$
& $N_3$ & $N_4$
& $N_3$ & $N_4$
& $N_3$ & $N_4$
& $N_3$ & $N_4$ \\
\midrule
200 & D1 & $K=3$ & 0.036 & 0.039 & 0.052 & 0.037 & 0.061 & 0.066 & 0.098 & 0.234 & 0.197 & 0.540 \\
 &  & $K=4$ & 0.037 & 0.041 & 0.046 & 0.053 & 0.112 & 0.267 & 0.347 & 0.811 & 0.695 & 0.988 \\
 &  & $K=5$ & 0.029 & 0.041 & 0.057 & 0.104 & 0.287 & 0.773 & 0.805 & 1.000 & 0.984 & 1.000 \\
\addlinespace[1pt]
200 & D2 & $K=3$ & 0.038 & 0.038 & 0.046 & 0.051 & 0.060 & 0.071 & 0.096 & 0.243 & 0.182 & 0.543 \\
 &  & $K=4$ & 0.029 & 0.045 & 0.048 & 0.063 & 0.107 & 0.286 & 0.344 & 0.810 & 0.715 & 0.994 \\
 &  & $K=5$ & 0.032 & 0.046 & 0.066 & 0.110 & 0.306 & 0.763 & 0.802 & 0.997 & 0.987 & 1.000 \\
\midrule
500 & D1 & $K=3$ & 0.050 & 0.048 & 0.046 & 0.050 & 0.047 & 0.095 & 0.125 & 0.368 & 0.313 & 0.792 \\
 &  & $K=4$ & 0.057 & 0.042 & 0.054 & 0.067 & 0.132 & 0.477 & 0.563 & 0.977 & 0.935 & 1.000 \\
 &  & $K=5$ & 0.044 & 0.037 & 0.060 & 0.135 & 0.472 & 0.957 & 0.983 & 1.000 & 1.000 & 1.000 \\
\addlinespace[1pt]
500 & D2 & $K=3$ & 0.034 & 0.042 & 0.031 & 0.051 & 0.048 & 0.106 & 0.118 & 0.374 & 0.323 & 0.799 \\
 &  & $K=4$ & 0.035 & 0.036 & 0.037 & 0.065 & 0.143 & 0.508 & 0.557 & 0.978 & 0.944 & 1.000 \\
 &  & $K=5$ & 0.030 & 0.035 & 0.055 & 0.152 & 0.498 & 0.969 & 0.976 & 1.000 & 1.000 & 1.000 \\
\midrule
1000 & D1 & $K=3$ & 0.015 & 0.037 & 0.023 & 0.035 & 0.063 & 0.122 & 0.150 & 0.522 & 0.443 & 0.935 \\
 &  & $K=4$ & 0.019 & 0.032 & 0.036 & 0.061 & 0.180 & 0.634 & 0.734 & 0.997 & 0.991 & 1.000 \\
 &  & $K=5$ & 0.029 & 0.031 & 0.061 & 0.176 & 0.652 & 0.996 & 0.999 & 1.000 & 1.000 & 1.000 \\
\addlinespace[1pt]
1000 & D2 & $K=3$ & 0.027 & 0.044 & 0.039 & 0.054 & 0.046 & 0.136 & 0.131 & 0.508 & 0.459 & 0.942 \\
 &  & $K=4$ & 0.028 & 0.038 & 0.048 & 0.076 & 0.187 & 0.621 & 0.741 & 0.998 & 0.995 & 1.000 \\
 &  & $K=5$ & 0.027 & 0.044 & 0.060 & 0.185 & 0.652 & 1.000 & 0.999 & 1.000 & 1.000 & 1.000 \\
\bottomrule
\end{tabular}
\end{threeparttable}
\end{adjustbox}
\end{table}

\clearpage
\section*{Acknowledgements}
This research is supported by funding from NIH UL1 TR002345 and R01 HD095986.
\bibliographystyle{plainnat}
\bibliography{reference}

\end{document}